\documentclass[11pt]{article}
\usepackage[a4paper,margin=27mm]{geometry}
\usepackage{amsmath,amssymb,amsthm,mathtools}
\usepackage{booktabs,microtype}
\usepackage{algorithm,algpseudocode}
\usepackage[T1]{fontenc}
\usepackage{lmodern}
\usepackage[hidelinks]{hyperref}
\hypersetup{pdftitle={Directed Hamiltonian-Cycle Parity in O*((3/2)\string^n) Deterministic Time and Polynomial Space},
  pdfauthor={Hanqing Li}}
\numberwithin{equation}{section}
\newtheorem{theorem}{Theorem}[section]
\newtheorem{lemma}[theorem]{Lemma}
\newtheorem{proposition}[theorem]{Proposition}
\newtheorem{corollary}[theorem]{Corollary}
\theoremstyle{definition}

\newcommand{\F}{\mathbb F_2}
\newcommand{\Zthree}{\mathbb Z_3}
\newcommand{\Os}{O^*}
\newcommand{\one}{\mathbf 1}
\newcommand{\E}{\mathbb E}
\newcommand{\Var}{\operatorname{Var}}
\newcommand{\supp}{\operatorname{supp}}
\newcommand{\diag}{\operatorname{diag}}
\newcommand{\rank}{\operatorname{rank}}
\newcommand{\im}{\operatorname{im}}
\newcommand{\Ham}{\operatorname{Ham}}
\newcommand{\Own}{\operatorname{own}}
\newcommand{\Even}{\mathcal E}
\newcommand{\calQ}{\mathcal Q}

\allowdisplaybreaks[1]

\title{Directed Hamiltonian-Cycle Parity in\\
  $\boldsymbol{O^*((3/2)^n)}$ Deterministic Time and Polynomial Space}
\author{Hanqing Li\\Peking University}
\date{}

\begin{document}
\maketitle

\begin{abstract}
We give a deterministic algorithm that computes the parity of the number
of Hamiltonian cycles in an $n$-vertex directed graph in
$O(n^4(3/2)^n)$ time and $O(n^2)$ bits of working space, improving the
$O^*(\varphi^n)$ bound of Bj\"orklund and Husfeldt. Their local-degree
formula reduces the problem to a weighted sum over solutions of structured
quadratic equations. We cover the corresponding ternary state space by
binary subcubes, each inducing an affine system. The Kuang--Wang cover can
be regenerated within the target bound; canonical ownership resolves its
overlaps, while self-loop conditional expectations bound every affine
solution visit. Rollback elimination shares the work across cover prefixes.
The same cover gives a Las Vegas algorithm listing all $L$ solutions of
$m$ affine product constraints in $N$ Boolean variables in expected time
$\operatorname{poly}(N,m)((3/2)^m+L)$ and polynomial space.
Finally, we show that complete enumeration can require
$\Omega((3/2)^n)$ visits even on strongly connected digraphs after an
optimal self-loop choice. This is a limitation of the enumeration method,
not a general lower bound for Hamiltonian-cycle parity.
\end{abstract}

\section{Introduction}
\label{sec:introduction}

Counting directed cycle covers modulo two is a determinant computation.
Requiring a cover to consist of one cycle adds a connectivity constraint.
Bj\"orklund and Husfeldt~\cite{BH13} expressed Hamiltonian-cycle parity
using only local outdegrees and obtained a deterministic
$\Os(\varphi^n)$-time, polynomial-space algorithm for general digraphs,
where $\varphi=(1+\sqrt5)/2$. On bipartite digraphs they obtained expected
time $\Os((3/2)^n)$. We attain the latter exponential base deterministically
on unrestricted digraphs, while retaining polynomial space.

\begin{theorem}\label{thm:main}
The parity of the number of Hamiltonian cycles in an $n$-vertex directed
graph can be computed deterministically in $O(n^4(3/2)^n)$ time and
$O(n^2)$ bits of working space.
\end{theorem}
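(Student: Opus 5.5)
The plan is to follow the pipeline sketched in the abstract, starting from the Bj\"orklund--Husfeldt local-degree formula~\cite{BH13}. It writes the parity of the number of Hamiltonian cycles as a sum over $\F$ of terms indexed by assignments. Each term is a product of local factors: for every vertex $v$, the factor depends only on the outdegree of $v$ into a chosen vertex subset, taken modulo small numbers. After fixing a self-loop/root convention, the relevant state per vertex takes one of three values. So the sum becomes a weighted count of solutions $x\in\{0,1,2\}^n$, or more precisely over a ternary structure, of quadratic equations over $\F$ whose quadratic part is structured by the adjacency matrix. The first step is to state this reduction precisely. The target form is $\sum_{s\in T}w(s)$ with $|T|\le 3^{n}$, where on every binary subcube (each coordinate restricted to a $2$-element subset of its $3$ states) the constraints become \emph{affine} over $\F$ in the subcube's free bits.

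Next I will invoke the Kuang--Wang covering of $\{0,1,2\}^n$ by binary subcubes. There are $O^*((3/2)^n)$ subcubes, each of dimension about $n$ with $2^{n}$ points, but on each one the affine system reduces the number of solutions. The key steps are the following.

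\textbf{(i)} Show that the cover can be regenerated on the fly, subcube by subcube in a fixed canonical order, in polynomial space and amortized polynomial time per subcube. I would do this recursively: the cover is a product or tree construction, and its prefixes are enumerated by DFS.

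\textbf{(ii)} Resolve overlaps by canonical ownership. Each ternary point $s$ is counted only in the lexicographically first subcube containing it. Membership of $s$ in earlier subcubes is a test of polynomial size, so it can be folded into the enumeration. It must add no exponential factor, and I would make it part of the affine solution enumeration by restricting to the owned region.

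\textbf{(iii)} Within a subcube, solve the affine system by Gaussian elimination and enumerate its solutions. Evaluating the weight $w$ at each solution costs $\mathrm{poly}(n)$ time. The elimination work is shared across cover prefixes by \emph{rollback}: along the DFS of the cover tree, each step adds one constraint row and undoes it on backtracking, costing $O(n^2)$ per step and $O(n^2)$ bits total.

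The main obstacle is step (iv): bounding the total number of affine solution visits by $O^*((3/2)^n)$. A subcube could contain up to $2^n$ solutions, so the naive total is too large. Here I would use the self-loop freedom. Adding or removing self-loops at a chosen set of vertices does not change the Hamiltonian-cycle count but does change the quadratic system. I would pick the self-loop pattern at random, compute the expected number of solution visits, and show it is $O^*((3/2)^n)$ summed over the cover. The expectation is over each subcube: a random affine shift halves the count per independent equation. Then I would derandomize by conditional expectations, fixing self-loops one vertex at a time. Each conditional expectation must be computable in polynomial time and space, presumably because the expected count factorizes as $2^{\dim-\operatorname{rank}}$ averaged over structured shifts. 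With visits bounded by $O(n^2)\cdot(3/2)^n$ and $O(n^2)$ work per visit or per rollback step, the total is $O(n^4(3/2)^n)$ time and $O(n^2)$ bits, which proves Theorem~\ref{thm:main}.
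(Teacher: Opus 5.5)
Your outline follows the paper's architecture, but the steps the paper actually has to prove are asserted with justifications that do not work. In (ii), deciding whether $s$ lies in some \emph{earlier} subcube is not a polynomial test, because there are exponentially many earlier centers. The owned region $Q_q\setminus\bigcup_{q'<q}Q_{q'}$ is also not affine, so it cannot be folded into the elimination. The missing idea is that, for the Kuang--Wang cover, the centers $a^S$ covering a fixed $s$ correspond exactly to the solutions $u=\one_S$ of the symmetric system $(L[U]+\diag(\one_{V_2}))u=\one_{V_2}$, where $U=\{i:s_i\ne0\}$ and $V_2=\{i:s_i=2\}$. This system is consistent because the diagonal of a symmetric matrix lies in its image. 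Choosing one solution deterministically in each block (free variables zero, or the lexicographically least one) gives $\Own(s)$ in polynomial time. Ownership is then only a filter applied after each solution is generated. The discarded visits are paid for by the bound on all visits $V(c)$, which your step (iv) does supply. Your reduction step is also too vague. The ternary state must be the pair $(x_i,(Bx)_i+c_i)\in\Sigma$ from the P2 condition alone, and the inner sum over $Y$ is evaluated per visit by the P3 uniqueness test. A ternary encoding of the whole partition $(X,Y,Z)$ would leave quadratic constraints on binary subcubes.

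In (i), the graph $L_b$ with $|\Even(L_b)|\le2(3/2)^b-1$ is obtained only by derandomizing $G(b,1/2)$. Both that construction and listing its centers scan all $2^b$ subsets. A single block of size $n$ therefore costs $2^n$, while constant-size blocks compound the factor $2$ and lose the base $3/2$. The paper's halving schedule $b_j=\lceil t_j/2\rceil$ fixes both problems. It gives $2^{\lceil t/2\rceil}\le\frac43(3/2)^t$ and at most $\log_2n+1$ blocks, hence $M\le2n(3/2)^n$ and fewer than $\frac83n(3/2)^n$ subset tests per traversal.

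Your final accounting reaches $n^4$ only by pairing an unexplained visit bound $O(n^2)(3/2)^n$ with an unavailable $O(n^2)$ cost per visit. P3 is a fresh $n\times n$ elimination, costing $O(n^3)$, and the visit bound is actually $V(c)\le M=O(n(3/2)^n)$. The true $n^4$ comes from two sources: $O(n^3)(M+V(c))$ in the final traversal, and $n$ diagonal passes of $O(n^3(3/2)^n)$ each. The latter needs the amortized bound $\sum_jb_jP_j\le9M$ on rollback insertions, where $P_j$ is the number of level-$j$ nodes. This bound again follows from the geometric blocks via $M/P_j\ge(3/2)^{t_{j+1}}$, with $t_{j+1}$ the length remaining after block $j$. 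Charging $\Theta(n)$ insertions per center instead would give $O(n^5(3/2)^n)$.
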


The bound includes construction and every regeneration of the covering
family. We use ordinary binary elimination; the polynomial factor does
not rely on fast matrix multiplication.

\paragraph{The algorithmic interface.}
The local-degree formula of~\cite{BH13} has two computational parts,
called P2 and P3. P2 lists the vectors satisfying $x\circ(Ax)=x$.
P3 assigns each nonzero vector a bit by testing consistency and uniqueness
of an affine system. Summing these bits gives the desired parity.
We retain P3 and improve the enumeration of P2.

Delete the input diagonal to obtain $B$, and write
$A=B+\diag(c+\one)$. Setting $y=Bx+c$ transforms P2 into
\[
 Bx+y=c,\qquad (x_i,y_i)\in\{(0,0),(1,0),(0,1)\}\quad(i\in[n]).
\]
Each pair has three legal states. Excluding any one leaves an affine line
over $\F$, so a cover of the ternary cube by binary subcubes gives a family
of affine systems. For any such cover with $M$ centers, the total number
$V(c)$ of affine solution visits, including repetitions, satisfies
$\E_cV(c)=M$. Conditional expectation selects a diagonal with $V(c)\le M$.
Changing the diagonal preserves Hamiltonian cycles for $n\ge2$.

The cover must support more than a small cardinality bound. We need to
generate it repeatedly without an exponential table, retain each state
at exactly one covering center, and pay for discarded visits as well as
retained ones. Kuang and Wang~\cite{KW26} construct a cover from the even
induced vertex sets of an auxiliary graph. Geometrically decreasing blocks
make its construction and repeated traversal fit the target bound in
polynomial space. A deterministic choice in their consistency proof gives
the required ownership map. Sharing column elimination along this product
cover yields the fourth-degree polynomial factor.

The local-degree formula, P3, and the self-loop conditional-expectation
principle are due to Bj\"orklund and Husfeldt~\cite[Sections 2 and 4.1]{BH13}.
The graph-based cover and its consistency argument are due to Kuang and
Wang~\cite{KW26}. Our contribution is to make the cover usable within this
parity framework with complete bounds on construction, regeneration,
overlap handling, and all solution visits.

\paragraph{Further results and related work.}
The same mechanism applies to affine product constraints
$u_i(\xi)v_i(\xi)=0$. With $m$ constraints in $N$ variables and $L$
solutions, we obtain polynomial-space Las Vegas enumeration in expected
time $\operatorname{poly}(N,m)((3/2)^m+L)$
(Theorem~\ref{thm:general}). Feasibility and search take deterministic
$\operatorname{poly}(N,m)(3/2)^m$ time. A single random ternary translation
controls repeated visits in the enumeration algorithm; every translation
gives a correct, duplicate-free output.

These constraints are the two-factor case of PAF-SAT, equivalently
2-Sub-Sat, studied by Arvind and Guruswami~\cite{AG21}. Their
$\Os((3/2)^r)$ randomized search algorithm, with exponent measured in
subspace dimension $r$, uses the same local choices
$u=0$, $v=0$, and $u+v=1$; see Theorem 15 and its proof in the full version.
Our bound is measured in the number of constraints and concerns complete
enumeration. It does not derandomize their dimension bound for an arbitrary
number of constraints. Finding one P2 solution would not suffice here:
$x=0$ is always feasible, while the Hamiltonian answer is a weighted sum
over nonzero solutions.

For general quadratic systems, Dinur~\cite[Theorem 1.2]{Dinur21} lists all
solutions with high probability in
$O(\max\{2^{0.6943N},L2^{\varepsilon N}\})$ time for fixed $\varepsilon>0$.
Our input class is more restricted, but the enumeration has zero error,
polynomial workspace, and polynomial overhead per output in total time.
Its delay between outputs may be exponential.

Finally, we prove an output-size barrier for complete P2 enumeration,
uniformly over all choices of artificial self-loops
(Theorem~\ref{thm:lower}). This identifies the exponential base of that
method, while leaving open faster ways to aggregate the weighted sum.

\paragraph{Conventions.}
Vertices are ordered as $[n]$ and adjacency rows index arc tails.
A Hamiltonian cycle is counted once by fixing its starting vertex;
reversed orientations are distinct. The graph has at most one arc per
ordered pair, and may have self-loops. For $n=1$ we return the input loop
bit. All linear algebra is over $\F$. Counts, expectations, and ranks
used as exponents are integers or rationals. $O^*$ suppresses polynomial
factors. For explicit bounds we count binary operations on a RAM with
$O(\log n)$-bit index words and charge long-integer arithmetic by bit
length. Space means working space; listed solutions go to an output stream.

\section{The local-degree interface}
\label{sec:local}

For a vertex set $S$, let $d_i(S)=\sum_{j\in S}a_{ij}$, reduced modulo
two when used in an expression over $\F$. For disjoint $X,Y$ that are not
both empty, write $X\prec Y$ if $\min X<\min Y$, with
$\min\varnothing=+\infty$. The parity specialization of the local-degree
formula of Bj\"orklund and Husfeldt~\cite{BH13} is
\begin{equation}\label{eq:local2}
 H(A):=|\Ham(A)|\bmod2
 =\sum_{\substack{X,Y,Z\text{ partition }V\\X\prec Y}}
 \prod_{x\in X}d_x(X)\prod_{y\in Y}d_y(Y)
 \prod_{z\in Z}d_z(X\cup Y)\quad\text{in }\F.
\end{equation}
Here $X$ is nonempty and empty products equal one. This identity results
from pairing the two color classes in the integer formula before dividing
by two; no division in $\F$ is involved.

For a fixed nonempty $X$, define
\begin{equation}\label{eq:fdef}
 f_A(X)=\sum_{\substack{Y\subseteq V\setminus X\\X\prec Y}}
       \prod_{y\in Y}d_y(Y)
       \prod_{z\in V\setminus(X\cup Y)}d_z(X\cup Y)\in\F.
\end{equation}
The factor depending only on $X$ in~\eqref{eq:local2} is one precisely
when its indicator vector $x$ satisfies $x\circ(Ax)=x$.
This is the P2 feasibility condition. The next lemma is the P3 computation.

\begin{lemma}[Computing the contribution]\label{lem:p3}
For nonempty $X$, $f_A(X)$ equals one if and only if the following affine
system in $\eta\in\F^n$ has exactly one solution:
\begin{align}
 \eta_i&=0 &&\text{if }i\in X\text{ or }i\le\min X,
       \label{eq:p3fixed}\\
 \sum_{j=1}^n a_{ij}\eta_j+d_i(X)\eta_i&=1+d_i(X)
       &&\text{if }i\notin X.\label{eq:p3rows}
\end{align}
Consequently $f_A(X)$ is computable in polynomial time and space.
\end{lemma}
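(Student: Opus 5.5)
The plan is to read $f_A(X)$ as the parity of the number of solutions of an affine system, and then use the fact that an affine solution set over $\F$ has size $0$ or $2^k$. First I identify each admissible $Y$ in~\eqref{eq:fdef} with its indicator vector $\eta\in\F^n$.

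The constraint $Y\subseteq V\setminus X$ says $\eta_i=0$ for $i\in X$. The constraint $X\prec Y$, with $X$ nonempty and $\min\varnothing=+\infty$, says $\min Y>\min X$. This means $\eta_i=0$ for $i\le\min X$, and it allows $Y=\varnothing$. Together these are exactly the fixed coordinates~\eqref{eq:p3fixed}. So the range of summation in~\eqref{eq:fdef} is in bijection with the vectors $\eta$ satisfying~\eqref{eq:p3fixed}.

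Next I show that the summand for $Y$ is the indicator that $\eta$ satisfies~\eqref{eq:p3rows}. The summand is a product of bits, so it equals $1$ exactly when every factor equals $1$. Each $i\notin X$ contributes one factor, according to whether $i\in Y$ or $i\in V\setminus(X\cup Y)$; vertices in $X$ contribute nothing.

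\begin{itemize}
\item If $\eta_i=1$, the factor is $d_i(Y)=\sum_j a_{ij}\eta_j$. The self-loop term $a_{ii}$ is included, consistent with $d_y(Y)$.
\item If $\eta_i=0$, the factor is $d_i(X\cup Y)=d_i(X)+d_i(Y)$, using that $X$ and $Y$ are disjoint.
\end{itemize}

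Both cases are captured by the single equation $d_i(Y)+(1+\eta_i)d_i(X)=1$. Setting $\eta_i=1$ gives $d_i(Y)=1$, and setting $\eta_i=0$ gives $d_i(Y)+d_i(X)=1$. Rearranging over $\F$ yields exactly~\eqref{eq:p3rows}. This row is affine in $\eta$ because $X$ is fixed, so $d_i(X)\in\F$ is a constant coefficient on $\eta_i$. Hence $f_A(X)$ equals, modulo two, the number of solutions of the system~\eqref{eq:p3fixed}--\eqref{eq:p3rows}.

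Finally, the solution set of an affine system over $\F$ is either empty or a coset of the kernel, so its size is $0$ or $2^k$ with $k=n-\rank$. This size is odd iff $k=0$, that is, iff the system has exactly one solution. For the complexity claim, the system has $n$ unknowns and at most $2n$ rows. Gaussian elimination decides consistency and computes the rank in $O(n^3)$ bit operations and $O(n^2)$ bits of space.

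The only delicate step is the case analysis that unifies the two membership cases into one affine row. Two details need care there: the diagonal entries $a_{ii}$ must be kept inside $d_i(Y)$, and the empty set $Y=\varnothing$ must be correctly admitted. The rest is routine.
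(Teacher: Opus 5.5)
Your proposal is correct and follows essentially the same route as the paper. You identify $Y$ with $\eta$, read the fixed coordinates as disjointness plus $X\prec Y$ (admitting $Y=\varnothing$), and check that each row for $i\notin X$ requires $d_i(Y)=1$ or $d_i(X)+d_i(Y)=1$ according to $\eta_i$. You then conclude from the fact that an affine solution set over $\F$ has size $0$ or $2^k$. The one cosmetic difference is that you derive the row from the unified condition $d_i(Y)+(1+\eta_i)d_i(X)=1$, whereas the paper substitutes the two values of $\eta_i$ into the given row.
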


\begin{proof}
Interpret $\eta$ as the indicator of $Y$. Equations~\eqref{eq:p3fixed}
give disjointness and the order $X\prec Y$, also allowing $Y$ to be
empty. For $i\in Y$, equation~\eqref{eq:p3rows} reduces to $d_i(Y)=1$.
For $i\notin X\cup Y$, it reduces to $d_i(X)+d_i(Y)=1$.
These are exactly the conditions that all factors of the corresponding
summand of~\eqref{eq:fdef} are odd. Thus $f_A(X)$ is the parity of the
number of solutions of this affine system. An affine system over $\F$
has either no solutions or $2^d$ solutions for some $d\ge0$, so its
solution count is odd exactly when it is a singleton. Gaussian elimination
tests consistency and uniqueness.
\end{proof}

Define
\begin{equation}\label{eq:PB}
 \mathcal P_B(c)=\{x\in\F^n:x_i((Bx)_i+c_i)=0\ (i\in[n])\},
 \qquad K_B(c)=|\mathcal P_B(c)|.
\end{equation}

Delete the input diagonal to obtain $B$, and choose any $c\in\F^n$.
Set $A=B+\diag(c+\one)$. The Boolean identity $x_i^2=x_i$ gives
\begin{equation}\label{eq:diagonalP2}
 x\circ(Ax)=x
 \quad\Longleftrightarrow\quad
 x_i((Bx)_i+c_i)=0\quad\text{for every }i.
\end{equation}
Since a Hamiltonian cycle on at least two vertices uses no self-loop,
\begin{equation}\label{eq:weighted}
 H(B)=H(A)
 =\sum_{\substack{x\in\mathcal P_B(c)\\x\ne0}}f_A(\supp x)
 \quad(n\ge2).
\end{equation}
The diagonal affects both the feasible sets and their weights, but the
weighted sum is invariant. The parity of $K_B(c)$ alone is not the target.

\section{From a cover to a weighted affine fiber}
\label{sec:parity}

Identify the
three symbols $0,1,2$ with the three points
\begin{equation}\label{eq:sigma}
 \Sigma=\{(0,0),(1,0),(0,1)\}\subseteq\F^2
\end{equation}
in this order. For $q\in\Zthree^m$ define
\[
 Q_q=\{s\in\Zthree^m:s_i\ne q_i\text{ for all }i\}.
\]
We call $q$ a center and $Q_q$ its binary subcube. A family of centers
$\calQ$ is a cover if $\bigcup_{q\in\calQ}Q_q=\Zthree^m$.
Under~\eqref{eq:sigma}, each $Q_q$ is an $m$-dimensional affine subspace
of $\F^{2m}$: any two distinct points over $\F$ form an affine line.

Fix the zero-diagonal input matrix $B$. For $x\in\F^n$, introduce
$y=Bx+c$. Equations~\eqref{eq:PB} and~\eqref{eq:sigma} give the bijection
\begin{equation}\label{eq:encoding}
 x\in\mathcal P_B(c)
 \quad\Longleftrightarrow\quad
 Bx+y=c,\quad (x_i,y_i)\in\Sigma\text{ for all }i.
\end{equation}
It is a bijection because $x$ determines $y$ once $B,c$ are fixed.
Let $\pi_B(s)=Bx+y$ for the pair $(x,y)$ represented by $s$.
The legal states for P2 are the fiber $\pi_B^{-1}(c)$.

\subsection{Restricting a fiber to a binary subcube}

For a center $q$, use a Boolean parameter $z_i$ for the two permitted
states at coordinate $i$. Table~\ref{tab:affine} fixes the parametrization.
Write $B_i=B[:,i]$ and let $e_i$ be the $i$th unit vector.

\begin{table}[ht]
\centering
\caption{Affine parametrization after prohibiting state $q_i$.}
\label{tab:affine}
\begin{tabular}{cccc}
\toprule
$q_i$ & $(x_i,y_i)$ & column $i$ of $M_q$ & contribution to $d_q$\\
\midrule
$0$ & $(1+z_i,z_i)$ & $B_i+e_i$ & $B_i$\\
$1$ & $(0,z_i)$ & $e_i$ & $0$\\
$2$ & $(z_i,0)$ & $B_i$ & $0$\\
\bottomrule
\end{tabular}
\end{table}

In particular,
\begin{equation}\label{eq:affinesystem}
 \pi_B(s)=M_qz+d_q,\qquad
 d_q=\sum_{i:q_i=0}B_i.
\end{equation}
The maps $z\mapsto s$ and $s\mapsto z$ are inverse on $Q_q$, regardless
of whether $M_q$ is singular. Hence the solutions of
\begin{equation}\label{eq:system}
 M_qz=c+d_q
\end{equation}
are in bijection with the states in $Q_q\cap\pi_B^{-1}(c)$.
Each system has $n$ variables and $n$ equations. Both $M_q$ and $d_q$
are independent of the right-hand side $c$.

\subsection{Controlling all solution visits by choosing the diagonal}

Let $\calQ$ be any fixed binary-subcube cover of $\Zthree^n$, with
$M=|\calQ|$. At this stage no graph-based construction is assumed. Define the integer
\begin{equation}\label{eq:visits}
 V(c)=\sum_{q\in\calQ}\#\{z\in\F^n:M_qz=c+d_q\}.
\end{equation}
This counts every visit in every affine system, including repetitions of
the same state in overlapping subcubes.

\begin{lemma}[A diagonal with bounded total output]\label{lem:diagonal}
Suppose a representation of $\calQ$ can enumerate its centers without
repetition in time $R$ and working space $S$. Then a vector $c$ satisfying
$V(c)\le M$ can be found deterministically in $O(nR+n^4M)$ time and
$S+O(n^2)$ bits of working space, provided $M\le3^n$.
The bound is in elementary binary operations; Section~\ref{sec:implementation}
improves it by sharing elimination across a product cover.
\end{lemma}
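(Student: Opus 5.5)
We use the method of conditional expectations on the bits of $c$, one coordinate at a time. First observe that $\E_c V(c)=M$ when $c$ is uniform over $\F^n$. For each center $q$, every $z\in\F^n$ solves $M_qz=c+d_q$ for exactly one $c$, namely $c=M_qz+d_q$. Hence $\sum_c\#\{z:M_qz=c+d_q\}=2^n$, and summing over $q$ gives $\sum_cV(c)=2^nM$.

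Next define, for a prefix $(c_1,\dots,c_k)$, the conditional expectation
\[
 E_k(c_1,\dots,c_k)=\E\bigl[V(c)\mid c_1,\dots,c_k\bigr]=\sum_{q\in\calQ}\E\bigl[\#\{z:M_qz=c+d_q\}\mid c_1,\dots,c_k\bigr].
\]
We fix the bits in order $k=1,\dots,n$. At each step we compute $E_k$ for both values of $c_k$ and keep a value whose expectation does not exceed $E_{k-1}$; since $E_{k-1}$ is the average of the two, this is always possible. Starting from $E_0=M$, we end at $V(c)=E_n\le M$. This requires $n$ phases. Each phase enumerates $\calQ$ once (cost $R$) and accumulates two integer sums, giving the $nR$ term.

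The main step is a closed form for each summand. Write $U=\im M_q$ and condition on the first $k$ coordinates of $c$, with the rest uniform. The number of solutions equals $2^{n-\rank M_q}$ if $c+d_q\in U$ and $0$ otherwise. So the conditional expectation is $2^{n-\rank M_q}\Pr[c+d_q\in U]$. The vector $c+d_q$ is uniform on the affine set $w+W_k$, where $w$ agrees with $c_{\le k}+d_q$ on the first $k$ coordinates and is zero elsewhere, and $W_k=\operatorname{span}(e_{k+1},\dots,e_n)$. The probability is then either $0$, if $(w+W_k)\cap U=\varnothing$, or $|U\cap W_k|/|W_k|=2^{\dim(U\cap W_k)-(n-k)}$. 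Consistency and the dimension can both be read off one Gaussian elimination: project $U$ onto the first $k$ coordinates and test whether $w$ lies in that projection, using $\dim(U\cap W_k)=\rank M_q-\rank(\text{projection})$. So each summand equals either $0$ or $2^{k-r_k}$, where $r_k$ is the rank of the first $k$ rows of $M_q$. That is $O(n^3)$ work per center per candidate bit, which gives $O(n^4M)$ over all phases. Forming $M_q$ and $d_q$ from Table~\ref{tab:affine} costs only $O(n^2)$.

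For the space bound, we keep only the current center (within $S$), one $n\times n$ matrix under elimination ($O(n^2)$ bits), and two running sums. Each summand is a power of two at most $2^n$, and there are at most $M\le3^n$ of them, so each sum has $O(n)$ bits. Adding a power of two into an $O(n)$-bit accumulator costs $O(n)$ per center, which stays within the stated time. The hypothesis $M\le3^n$ is exactly what keeps these accumulators at $O(n)$ bits. The obstacle I expect is stating the conditional expectation formula cleanly, in particular justifying exact integer comparisons; the remaining steps are bookkeeping.
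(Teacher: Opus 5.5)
Your proposal is correct and is essentially the paper's proof: you fix $c$ bit by bit by conditional expectations, evaluate each center's term as $0$ or $2^{k-r_k}$ by eliminating the first $k$ rows of $M_q$, and use $n$ traversals with $O(n^3)$ work per center and candidate, keeping counters at $O(n)$ bits because $M\le3^n$. The only difference is cosmetic: you reach the per-center term through $\im M_q\cap W_k$, whereas the paper counts the $z$ satisfying the first $k$ rows and notes that each such $z$ forces a unique completion of $c$; and since $r_k\le k$, your terms are integers, which already settles the exact-comparison concern you raised.
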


\begin{proof}
For a uniform $c$, each fixed pair $(q,z)$ satisfies~\eqref{eq:system}
with probability $2^{-n}$. Thus
\begin{equation}\label{eq:visitmean}
 \E_c V(c)=\sum_{q\in\calQ}\sum_{z\in\F^n}2^{-n}=M.
\end{equation}
Suppose the prefix $c_{\le\ell}=(c_1,\ldots,c_\ell)$ is fixed. Let
$Z_q(c_{\le\ell})$ count the assignments $z\in\F^n$ satisfying the
first $\ell$ rows of~\eqref{eq:system}. For any such assignment the
remaining $n-\ell$ right-hand-side bits have one required value, so
\begin{equation}\label{eq:prefixCE}
 \E[V(c)\mid c_{\le\ell}]
 =2^{-(n-\ell)}\sum_{q\in\calQ}Z_q(c_{\le\ell}).
\end{equation}
For a given $q$, Gaussian elimination computes this count as zero for an
inconsistent prefix system, and otherwise as
$2^{n-\rank M_q[1:\ell,:]}$.

To fix the next bit, regenerate all centers once, computing and summing
the counts for both candidate prefixes. Their denominators in
\eqref{eq:prefixCE} agree, so exact integer comparison suffices.
Choose the candidate with the smaller numerator, breaking ties by zero.
The conditional expectation never increases, and at the final prefix
it equals $V(c)$, proving $V(c)\le M$.

There are $n$ complete traversals, and ordinary elimination takes
$O(n^3)$ time per center and candidate. Each numerator is at most $M2^n$,
so its bit length is $O(n)$ under the stated size bound; additions and
comparisons cost $O(n)$ time. Only the current center and its elimination
workspace are retained, proving the claimed bounds.
\end{proof}

The distinction between $K_B(c)$ and $V(c)$ matters. A small number of
distinct feasible states would not itself control the work caused by
overlapping subcubes. Lemma~\ref{lem:diagonal} bounds the quantity that
the final traversal actually pays for.

After choosing $c$, enumerate each affine system and retain a state only
at a canonical covering center $\Own(s)$. If this map is computable in
polynomial time and space, every P2 solution is processed exactly once
without storing a visited-state list. The next section supplies such a
map together with an efficiently regenerable cover.

\section{Deterministic covers that can be regenerated in polynomial space}
\label{sec:covers}

We construct covers independently of the input digraph.

\subsection{The Kuang--Wang construction and an ownership map}

Let $L$ be the zero-diagonal adjacency matrix of a simple undirected graph
on $[b]$, and define its even induced vertex sets by
\[
 \Even(L)=\{S\subseteq[b]:(L\one_S)_i=0\text{ for every }i\in S\}.
\]
Following Kuang and Wang~\cite[Lemmas 2.1 and 2.2]{KW26}, associate to
$S\in\Even(L)$ the center
\begin{equation}\label{eq:kwcenter}
 a_i^S=\begin{cases}
 0,&i\in S,\\
 1,&i\notin S\text{ and }(L\one_S)_i=1,\\
 2,&i\notin S\text{ and }(L\one_S)_i=0.
 \end{cases}
\end{equation}

\begin{lemma}[Cover and ownership]\label{lem:kw}
The centers $a^S$, $S\in\Even(L)$, are distinct and cover $\Zthree^b$.
There is a deterministic polynomial-time map $\Own_L$ assigning each
$s\in\Zthree^b$ to a center with $s\in Q_{\Own_L(s)}$.
\end{lemma}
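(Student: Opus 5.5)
The plan is to prove the three claims in turn: distinctness is immediate, the covering property reduces to a linear system over $\F$, and the ownership map is obtained by solving that system deterministically. For distinctness, note from~\eqref{eq:kwcenter} that $S=\{i:a^S_i=0\}$, so the center $a^S$ determines $S$. Hence distinct even sets give distinct centers.

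For the cover, fix $s\in\Zthree^b$ and partition $[b]$ into $T_0,T_1,T_2$ according to the value of $s_i$. We seek $S\in\Even(L)$ with $a^S_i\ne s_i$ for every $i$. Reading off~\eqref{eq:kwcenter} coordinate by coordinate, the requirements are as follows.
\begin{itemize}
\item Membership: $S\subseteq W:=T_1\cup T_2$, since $i\in T_0$ forbids $a^S_i=0$.
\item Coordinates in $T_1$: for $i\in T_1$ we need $(L\one_S)_i=0$. If $i\in S$ this is evenness; if $i\notin S$ it rules out $a^S_i=1$.
\item Coordinates in $T_2$: for $i\in T_2\setminus S$ we need $(L\one_S)_i=1$, and for $i\in T_2\cap S$ evenness gives $(L\one_S)_i=0$. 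Both cases combine to $(L\one_S)_i+[i\in S]=1$.
\end{itemize}
Write $u=\one_S\in\F^W$, let $L_W$ be the principal submatrix of $L$ on $W$, and let $D=\diag(\one_{T_2})$. Since $S\subseteq W$, all the conditions become the single square system
\[
 (L_W+D)\,u=\one_{T_2}\qquad\text{over }\F .
\]
Conversely, any solution $u$ gives an even set $S=\supp u$ whose center avoids $s$, by reversing the same case analysis.

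The key step, and the only nontrivial one, is to show that this system is always consistent. Put $N=L_W+D$. It is symmetric, and since $L$ has zero diagonal, its diagonal is exactly $\one_{T_2}$, which is also the right-hand side. I would invoke the standard fact that for a symmetric $N$ over $\F$ the diagonal vector lies in $\im N$. The proof takes $x\in\ker N$ and computes
\[
 0=x^{\mathsf T}Nx=\sum_i N_{ii}x_i^2+2\sum_{i<j}N_{ij}x_ix_j=\sum_i N_{ii}x_i .
\]
Thus the diagonal is orthogonal to $\ker N=\ker N^{\mathsf T}$, and therefore lies in $\im N$. This yields a solution, and hence a covering center, for every $s$.

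For the ownership map, run Gaussian elimination on this system with a fixed pivot order and set all free variables to zero. This produces a canonical $u$ in $O(b^3)$ time and $O(b^2)$ bits of space. Define $\Own_L(s)=a^{\supp u}$; the previous paragraph guarantees $s\in Q_{\Own_L(s)}$. The map depends only on $s$ and $L$, so it is deterministic and polynomial. I expect the main obstacle to be the consistency argument, specifically keeping the symmetric-diagonal identity exact over $\F$. Once the reduction to $(L_W+D)u=\one_{T_2}$ is set up correctly, the other steps are routine.
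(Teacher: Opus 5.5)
Your proposal is correct and follows the paper's proof step for step. It uses the same system $\bigl(L[U]+\diag(\one_{V_2})\bigr)u=\one_{V_2}$ on $U=V_1\cup V_2$, the same fact that a symmetric matrix's diagonal lies in its image (via $w^{T}Mw=w^{T}\delta(M)$ and $\im M=(\ker M)^{\perp}$), distinctness because $S$ is the zero set of $a^S$, and ownership by fixed-order elimination with free variables set to zero. Your case analysis, which derives the system as both necessary and sufficient for $a^S$ to avoid $s$, is just a more explicit form of the verification the paper carries out.
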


\begin{proof}
We use the following elementary fact: if $M$ is a symmetric binary matrix,
then its diagonal vector $\delta(M)$ belongs to $\im M$.
Indeed, for $w\in\ker M$, cancellation of symmetric off-diagonal terms gives
\[
 w^T\delta(M)=w^TMw=0.
\]
Since $\im M=(\ker M)^{\perp}$, the assertion follows.
This is the linear-algebra form of the handshake argument in~\cite{KW26}.

Fix $s$, put $V_j=\{i:s_i=j\}$ and $U=V_1\cup V_2$, and solve
\begin{equation}\label{eq:owner}
 \bigl(L[U]+\diag(\one_{V_2})\bigr)u=\one_{V_2}
 \quad\text{on }U.
\end{equation}
Here $L[U]$ is the principal submatrix on $U$. The right side is the
diagonal vector of the symmetric matrix on the left, so the system is
consistent, including when $U$ is empty. Extend $u$ by zero outside $U$
and let $S=\supp u$.

For $i\in V_1$, the equation gives $(Lu)_i=0$; for $i\in V_2$ it
gives $(Lu)_i=1+u_i$. Whenever $u_i=1$, either equation yields $(Lu)_i=0$,
and hence $S\in\Even(L)$. If $i\in V_0$, then $i\notin S$ and
$a_i^S\ne0$. If $i\in V_1$, then $a_i^S\in\{0,2\}$; if $i\in V_2$,
then $a_i^S\in\{0,1\}$. Thus $a_i^S\ne s_i$ in every coordinate.
Finally, $S$ is exactly the zero set of $a^S$, proving distinctness.

Fix the vertex order and a deterministic row-reduction convention, and
set all free variables of~\eqref{eq:owner} to zero. Define $\Own_L(s)$
to be the center obtained from that particular solution. This specifies
one center without searching the center family or storing an ownership table.
\end{proof}

\subsection{Choosing each auxiliary graph without a subset table}

\begin{lemma}\label{lem:localcover}
For every $b\ge1$, one can deterministically construct an undirected
graph $L_b$ such that
\begin{equation}\label{eq:localsize}
 |\Even(L_b)|\le2(3/2)^b-1
\end{equation}
in time $\operatorname{poly}(b)2^b$ and polynomial space. After construction,
all its centers can be enumerated without repetition in
$\operatorname{poly}(b)2^b$ total time and polynomial space.
\end{lemma}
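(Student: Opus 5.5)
The plan is to build $L_b$ one vertex at a time by the method of conditional expectations, with a random graph as the reference. For a uniformly random simple graph $L$ on $[b]$, a fixed $S$ with $|S|=k$ is even when each of its $k$ vertices has even degree inside $S$. The induced graph $L[S]$ is uniform on $k$ vertices, and the even-degree condition imposes $k-1$ independent parity constraints on its edges (degree sums are automatically even). So for $k\ge1$ the probability is $2^{-(k-1)}$, and $\varnothing$ is always even. This gives
\[
 \E|\Even(L)|=1+\sum_{k\ge1}\binom bk2^{1-k}=1+2\bigl((3/2)^b-1\bigr)=2(3/2)^b-1,
\]
which is exactly the bound~\eqref{eq:localsize}. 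A graph achieving at most the expectation therefore exists, and the task is to find one deterministically.

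I would fix the edges in a fixed order, say vertex $v$'s edges to $1,\ldots,v-1$ in turn. At each step I choose the value of the next edge bit that does not increase the conditional expectation $\E[|\Even(L)|\mid\text{fixed edges}]$. Evaluating this conditional expectation is the main obstacle: it must take $\operatorname{poly}(b)2^b$ time per step, or $\operatorname{poly}(b)2^b$ in total, without a table over subsets. I would compute it directly, as a sum over all $2^b$ subsets $S$ of the conditional probability that $S$ is even. For a fixed $S$, that probability is determined by an affine system over $\F$. The unknowns are the unfixed edges inside $S$, and there is one equation per vertex of $S$, namely its degree parity in $S$ set to zero. The probability is $0$ if the system is inconsistent and $2^{-\rank}$ otherwise. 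Gaussian elimination gives this in $\operatorname{poly}(b)$ time. With $O(b^2)$ edge steps and two candidate values each, the construction costs $\operatorname{poly}(b)2^b$ time. Only the current partial graph and one elimination workspace are stored, so the space is polynomial. The conditional expectation never increases and ends at $|\Even(L_b)|$, which proves~\eqref{eq:localsize}. Tie-breaking toward $0$ makes the construction deterministic and reproducible, which matters for regeneration later.

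I expect a simplification here. Under a vertex-by-vertex order the system for $S$ is nearly triangular, so the rank is computable in closed form. But plain elimination already suffices for the stated bound, so I would not depend on the closed form.

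For enumeration after construction, I iterate over all $S\subseteq[b]$ in a fixed order, such as the binary counter. For each $S$ I test $(L_b\one_S)_i=0$ for all $i\in S$ in $O(b^2)$ time and output $a^S$ by~\eqref{eq:kwcenter} when the test succeeds. By Lemma~\ref{lem:kw} the centers $a^S$ are distinct, so no center is repeated, and the total cost is $\operatorname{poly}(b)2^b$ in polynomial space. If the construction must also be redone rather than stored, its polynomial description, $O(b^2)$ edge bits, can be kept, which is within polynomial space in any case.
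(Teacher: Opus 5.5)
Your proposal is correct and follows the paper's proof essentially step for step: the same expectation $2(3/2)^b-1$ over $G(b,1/2)$, the same edge-by-edge conditional-expectation derandomization that scans all $2^b$ subsets for each candidate edge value, and the same scan-and-test enumeration relying on Lemma~\ref{lem:kw} for distinctness. The only difference is that you evaluate each conditional probability by general Gaussian elimination. The paper instead uses the closed form $2^{-(|S|-\kappa(F_S))}$, with a parity-sum check on each connected component of the unexposed pairs; this is cheaper per subset, but the saving is not needed for the $\operatorname{poly}(b)2^b$ bound.
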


\begin{proof}
For a uniformly random graph $L\sim G(b,1/2)$ and nonempty $S$, the degree
parities in $L[S]$ are the image of the random edge vector under the binary
incidence matrix of the complete graph on $S$. This matrix has rank
$|S|-1$: its image is the subspace of vectors whose coordinate sum is zero.
The assertion also holds for $|S|=1$, with rank zero. Consequently,
\begin{equation}\label{eq:kwmean}
 \E|\Even(L)|
 =1+\sum_{s=1}^b\binom bs2^{-(s-1)}
 =2(3/2)^b-1.
\end{equation}

Expose the $\binom b2$ edge bits in a fixed order. We show how to compute
the exact conditional expectation after any partial assignment. Fix $S$,
and let $F_S$ be the graph on $S$ whose edges are the still unexposed pairs.
Let $h\in\F^S$ record the degree parities contributed by exposed edges
whose assigned value is one. The remaining parity constraints are the
incidence system $\partial_{F_S}e=h$. Its image consists of the vectors
whose sum on each connected component is zero, and its rank is
$|S|-\kappa(F_S)$, where isolated vertices count as components. Therefore
\begin{equation}\label{eq:localCE}
 \Pr[S\in\Even(L)\mid\text{exposed edges}]
 =\begin{cases}
 2^{-(|S|-\kappa(F_S))},
   &\sum_{i\in C}h_i=0\text{ for every component }C,\\
 0,&\text{otherwise}.
 \end{cases}
\end{equation}
The empty set has probability one. The incidence-matrix assertion follows,
for example, by solving on a spanning tree of each component; all but
one vertex parity can be prescribed independently.

For each of the two candidate values of the next edge, scan all $2^b$
subsets $S$, sum~\eqref{eq:localCE}, and choose the value giving the smaller
sum, breaking ties by zero. The current conditional expectation is the
average of the two candidate expectations, so it never increases.
At termination it is $|\Even(L_b)|$, proving~\eqref{eq:localsize}.

All probabilities have denominator dividing $2^b$. After multiplying by
$2^b$, their sum is an integer at most $2^{2b}$, with $O(b)$ bits.
Each subset is processed and discarded using polynomial space.
There are polynomially many edge decisions and polynomial work per subset,
giving the stated time with exponential factor exactly $2^b$.
Retain only the resulting graph. To enumerate its centers, scan all
subsets again, test evenness, and output~\eqref{eq:kwcenter} on success.
\end{proof}

\subsection{Geometric blocks and the cost of regeneration}

The local bound $2(3/2)^b$ carries a constant factor per block, and the
local enumerator scans $2^b$ subsets even when it outputs far fewer centers.
The following block schedule controls both costs simultaneously.

\begin{theorem}[A regenerable cover]\label{thm:streamcover}
For every $m\ge1$, a cover $\calQ_m$ of $\Zthree^m$ can be represented
using polynomial space such that
\begin{equation}\label{eq:globalsize}
 M:=|\calQ_m|\le2m(3/2)^m.
\end{equation}
Its representation can be constructed, and its centers enumerated once
each, in $\operatorname{poly}(m)(3/2)^m$ time and polynomial space.
The same bound holds for each subsequent complete enumeration from the
representation. A canonical covering center $\Own(s)$ is computable in
polynomial time for every state $s$.
\end{theorem}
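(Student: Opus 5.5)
We plan to build $\calQ_m$ as a product of local Kuang--Wang covers over a partition of $[m]$ into geometrically decreasing blocks. Concretely, let $b_1\ge b_2\ge\cdots\ge b_k\ge1$ with $\sum_t b_t=m$, taking $b_t$ roughly halving: $b_1\approx m/2$, $b_2\approx m/4$, and so on, down to blocks of size one. Then $k=O(\log m)$. For each block we invoke Lemma~\ref{lem:localcover} to obtain $L_{b_t}$, and we define $\calQ_m$ as the set of concatenations $(a^{S_1},\ldots,a^{S_k})$ with $S_t\in\Even(L_{b_t})$. The representation stores only the $k$ graphs, which takes $O(m^2)$ bits. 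This family covers $\Zthree^m$ because $Q_q$ for a concatenated center is the product of the block subcubes. For the same reason we set $\Own(s)$ to be the concatenation of $\Own_{L_{b_t}}(s|_{\text{block }t})$, which is computable in polynomial time by Lemma~\ref{lem:kw}. Distinctness of the product centers follows from distinctness within each block.

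For the size bound we use $|\Even(L_b)|\le 2(3/2)^b-1<2(3/2)^b$. This gives $M<2^k(3/2)^m$, so we need $2^k\le 2m$, that is, $k\le 1+\log_2 m$. Halving blocks give this. A simpler alternative would be one big block plus singletons, but a singleton block has $|\Even|=2$ against $3/2$, and that loses a factor of $(4/3)$ per singleton. The schedule must therefore keep the number of blocks logarithmic. With sizes $\lceil m/2\rceil,\lceil m/4\rceil,\ldots$ adjusted to sum to $m$, the count $k\le\lceil\log_2 m\rceil+1$ is fine. We still need to check $2^k\le 2m$ precisely, and if necessary tune the block sizes, for example letting the last block absorb the remainder.

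The main obstacle is cost. Constructing $L_{b_t}$ costs $\operatorname{poly}(b_t)2^{b_t}$. Since $b_1\approx m/2$, we have $2^{b_1}\approx 2^{m/2}\le(3/2)^m$, because $\sqrt2<3/2$; the later blocks are cheaper still. Enumeration is the delicate part. We enumerate the product by nested loops, outermost over block $1$, regenerating block $t$'s centers once for each prefix of centers in blocks $1,\ldots,t-1$. The cost of a full traversal is then
\[
\sum_{t=1}^k \Bigl(\prod_{r<t}2(3/2)^{b_r}\Bigr)\operatorname{poly}(m)2^{b_t}.
\]
The $t$-th term is at most $\operatorname{poly}(m)2^{t}(3/2)^{m-\sum_{r\ge t}b_r}2^{b_t}$. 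Since $b_t\ge\sum_{r>t}b_r$ roughly, the tail satisfies $\sum_{r\ge t}b_r\le 2b_t$, so $2^{b_t}\le(3/2)^{2b_t}\le(3/2)^{\sum_{r\ge t}b_r}$ up to rounding. Each term is therefore $\operatorname{poly}(m)(3/2)^m$, and summing over $k=O(\log m)$ terms keeps the bound. The inequality $2\le(3/2)^2$ is exactly why halving works: the ratio between consecutive blocks must be at least $1/(\log_{3/2}2-1)\approx1.41$. We will verify that the rounding in the block sizes does not break $2^{b_t}\le(3/2)^{\sum_{r\ge t}b_r}$ by more than a constant factor, for instance by choosing $b_t=\lceil(m-\sum_{r<t}b_r)/2\rceil$. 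Space stays polynomial because only the current subset index per block is kept, and each later enumeration costs the same as the first.
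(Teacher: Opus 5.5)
Your construction is the paper's. It uses the halving schedule $b_t=\lceil T_t/2\rceil$ with $T_t=\sum_{r\ge t}b_r$, the product of the Kuang--Wang block families with blockwise ownership, and nested loops with the largest block outermost; the cost of these loops is the paper's scan count~\eqref{eq:scanW}. However, the step you call the main obstacle is argued in the wrong direction.

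\textbf{The key cost inequality.} The premise $b_t\ge\sum_{r>t}b_r$ is true for halving, but it gives $T_t\le 2b_t$, hence $(3/2)^{2b_t}\ge(3/2)^{T_t}$. So the second link of $2^{b_t}\le(3/2)^{2b_t}\le(3/2)^{T_t}$ does not follow; it needs the reverse comparison $2b_t\le T_t+O(1)$.

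The cost condition $2^{b_t}\le(3/2)^{T_t}$ is an \emph{upper} bound on block size, $b_t\le\log_2(3/2)\,T_t\approx0.585\,T_t$. Equivalently, $b_t/T_{t+1}\le1/(\log_{3/2}2-1)\approx1.41$. So your constant $1.41$ is an upper bound on the ratio of a block to the remaining tail, not a lower bound on the ratio of consecutive blocks. It is the size bound $2^k\le2m$ that forces each block to be at least about half the residual.

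Halving lies in this window, so your conclusion is true once argued correctly. From $\sum_{r>t}b_r=\lfloor T_t/2\rfloor\ge b_t-1$ you get $2b_t\le T_t+1$, hence $2^{b_t}\le(3/2)^{2b_t}\le\tfrac32(3/2)^{T_t}$. The paper's inequality~\eqref{eq:halving}, $2^{\lceil t/2\rceil}\le\tfrac43(3/2)^t$ (checked for even and odd $t$ via $(8/9)^k$), is the same fact with a slightly better constant.

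\textbf{The block count.} The bound $k\le\lceil\log_2m\rceil+1$ does not imply $2^k\le2m$: for $m=3$ it allows $2^k=8>6$. For your final schedule the residuals satisfy $T_{t+1}=\lfloor T_t/2\rfloor$. Hence $k$ is the number of binary digits of $m$, namely $\lfloor\log_2m\rfloor+1$, and $2^k\le2m$ holds as the paper uses it.

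With these two corrections your argument is complete and coincides with the paper's proof.
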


\begin{proof}
Write $a=3/2$. Starting with $t_1=m$, define
\[
 b_j=\lceil t_j/2\rceil,\qquad t_{j+1}=\lfloor t_j/2\rfloor
\]
until the remaining length is zero. This gives
$g=\lfloor\log_2m\rfloor+1$ blocks and $2^g\le2m$.
For example, $m=16$ gives block lengths $8,4,2,1,1$.
Construct the graph of Lemma~\ref{lem:localcover} on each block and
write $m_j=|\Even(L_{b_j})|\le2a^{b_j}$.
Use the Cartesian product of the block center families as $\calQ_m$.
The cover property follows blockwise, and
\[
 M=\prod_{j=1}^gm_j\le2^ga^m\le2ma^m.
\]
The blockwise ownership map from Lemma~\ref{lem:kw} gives $\Own$.

Enumerate the product by nested loops with the largest block outermost.
At level $j$, scan all $2^{b_j}$ subsets of that block and descend to
level $j+1$ only when the current subset is even. Each inner loop is
restarted from the beginning, with no cache. The exact number of subset
tests in one complete traversal is
\begin{equation}\label{eq:scanW}
 W=\sum_{j=1}^g\left(\prod_{i<j}m_i\right)2^{b_j}.
\end{equation}
For every integer $t\ge1$,
\begin{equation}\label{eq:halving}
 2^{\lceil t/2\rceil}\le\frac43a^t.
\end{equation}
For $t=2k$ the ratio of the left side to $a^t$ is $(8/9)^k$;
for $t=2k+1$ it is $(4/3)(8/9)^k$. Since the preceding blocks have
total length $m-t_j$, equations~\eqref{eq:scanW} and~\eqref{eq:halving} give
\begin{equation}\label{eq:scanbound}
 W\le\frac43a^m\sum_{j=1}^g2^{j-1}
   =\frac43(2^g-1)a^m<\frac83ma^m.
\end{equation}
This accounts for failed evenness tests and every restart. Each test
and each output requires only polynomial additional work.

Constructing all the graphs costs at most
$\operatorname{poly}(m)\sum_j2^{b_j}
 \le\operatorname{poly}(m)2^{\lceil m/2\rceil}$,
which satisfies the desired bound by~\eqref{eq:halving}.
The graphs occupy $O(\sum_jb_j^2)=O(m^2)$ bits.
The loop stack, current subsets, current center, and ownership computations
require polynomial space. Restarting the traversal changes none of these
bounds.
\end{proof}

The theorem bounds complete traversals, not the delay between consecutive
centers. That distinction is sufficient: the algorithm below makes only
polynomially many complete traversals. Geometric block sizes and their
nesting order are part of the construction, rather than an implementation
choice that can be omitted from the time analysis.

\section{The deterministic algorithm and rollback implementation}
\label{sec:implementation}

Instantiate Section~\ref{sec:parity} with the cover $\calQ_n$ of
Theorem~\ref{thm:streamcover}. The algorithm below specifies the
computation independently of the elimination data structure; rollback
implements its diagonal-selection step more efficiently.

\subsection{Enumeration, ownership, and contribution accumulation}

An affine solution space can be streamed by storing one particular
solution and a basis of its nullspace. Enumerating all binary combinations
of the basis vectors visits each affine solution once, with polynomial
work per visit and polynomial memory. For a generated state $s$, compute
$\Own(s)$ blockwise, and process $s$ only when the current center is
its owner. This does not impose an additional linear constraint on
\eqref{eq:system}; it is a polynomial-time test after a solution has
been generated.

\begin{algorithm}[ht]
\caption{Directed Hamiltonian-cycle parity}
\label{alg:main}
\begin{algorithmic}[1]
\Require Adjacency matrix of an $n$-vertex digraph, $n\ge2$
\State Delete the input diagonal to obtain $B$.
\State Construct the block graphs representing $\calQ_n$
       (Theorem~\ref{thm:streamcover}).
\State Choose $c$ by the prefix conditional expectations of
       Lemma~\ref{lem:diagonal}.
\State $A\gets B+\diag(c+\one)$; $h\gets0$.
\For{each center $q$ regenerated from $\calQ_n$}
  \State Form $M_q,d_q$ using Table~\ref{tab:affine}.
  \For{each affine solution $z$ of $M_qz=c+d_q$, streamed by elimination}
    \State Decode $z$ to its ternary state $s$ and compute $\Own(s)$.
    \If{$q=\Own(s)$}
      \State $X\gets\{i:s_i=1\}$.
      \If{$X\ne\varnothing$}
        \State $h\gets h+f_A(X)$ in $\F$, using Lemma~\ref{lem:p3}.
      \EndIf
    \EndIf
  \EndFor
\EndFor
\State \Return $h$
\end{algorithmic}
\end{algorithm}

\begin{proof}[Proof of Theorem~\ref{thm:main}]
Consider Algorithm~\ref{alg:main}. Every affine solution it generates
decodes to a legal state of syndrome $c$ by~\eqref{eq:affinesystem}.
Conversely, every state in $\pi_B^{-1}(c)$ lies in at least one covering
subcube and therefore is generated. Its owner is one of these covering
centers, so it is retained exactly once. The bijection~\eqref{eq:encoding}
then shows that the retained states correspond exactly to
$\mathcal P_B(c)$. Lemma~\ref{lem:p3} computes their nonempty-set
contributions, and~\eqref{eq:weighted} proves that the returned bit is
$H(B)$, equal to the answer for the input.

Proposition~\ref{prop:resources} proves the stated time and space bounds.
The one-vertex case is handled by its input loop bit.
\end{proof}

\subsection{Sharing column elimination along the product traversal}

Let $a=3/2$, $P_j=\prod_{i\le j}m_i$, $P_0=1$, and $M=P_g$.
Write $t_j=\sum_{i\ge j}b_i$ for the residual length, with $t_{g+1}=0$.
The cover-volume bound and the local construction imply
\begin{equation}\label{eq:localbounds}
 a^{b_j}\le m_j\le2a^{b_j}.
\end{equation}
At a traversal node, the columns of $M_q$ for previously chosen blocks
remain unchanged when the next block center is selected.

\begin{lemma}[Amortized column insertions]\label{lem:insertions}
Insert a block's columns when entering its node and remove them on return.
For the halving cover, the number of column insertions in a full traversal is
\[
 I=\sum_{j=1}^g b_jP_j\le9M.
\]
More generally, $I\le9CM$ whenever $b_j\le C(t_{j+1}+1)$ for a constant $C$.
\end{lemma}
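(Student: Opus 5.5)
The plan is to count insertions level by level, bound each level by $M$ times an exponentially decaying factor, and sum a geometric-type series. First I account for where insertions occur. In the nested traversal of Theorem~\ref{thm:streamcover}, a node at level $j$ is entered once for each choice of centers in blocks $1,\dots,j$. That gives exactly $P_j$ entries, and each entry inserts the $b_j$ columns of block $j$. Hence $I=\sum_j b_jP_j$ exactly. Failed evenness tests insert nothing, since we descend only on even subsets.

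Next I bound $P_j$ in terms of $M$ using the lower half of~\eqref{eq:localbounds}. The bound $m_i\ge a^{b_i}$ is the volume argument: each binary subcube has $2^{b_i}$ points, and the $m_i$ subcubes must cover all $3^{b_i}$ ternary points. Writing $P_j=M/\prod_{i>j}m_i$ then gives
\[
 P_j\le M\,a^{-\sum_{i>j}b_i}=M\,a^{-t_{j+1}}.
\]
Combining this with the hypothesis $b_j\le C(t_{j+1}+1)$ yields
\[
 I\le CM\sum_{j=1}^g (t_{j+1}+1)\,a^{-t_{j+1}}.
\]

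To finish, I observe that the residual lengths $t_2>t_3>\cdots>t_{g+1}=0$ are distinct nonnegative integers, because every $b_j\ge1$. The sum is therefore dominated by the full series
\[
 \sum_{t\ge0}(t+1)(2/3)^t=\frac{1}{(1-2/3)^2}=9,
\]
which gives $I\le 9CM$. For the halving cover, $b_j=\lceil t_j/2\rceil$ and $t_{j+1}=\lfloor t_j/2\rfloor$, so $b_j\le t_{j+1}+1$. This is the case $C=1$ and gives $I\le 9M$.

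The only step needing care is the second one. There the upper bound $m_j\le 2a^{b_j}$ would be the wrong direction: we need the lower bound on the deeper blocks, so that dividing by $\prod_{i>j}m_i$ produces the decay. The rest is the distinctness of the $t_{j+1}$, which lets us compare with the full series.
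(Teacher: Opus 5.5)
Your proposal is correct and follows the paper's proof essentially step for step. Both arguments count $P_j$ entries at level $j$, use the volume lower bound $m_i\ge a^{b_i}$ to get $P_j\le Ma^{-t_{j+1}}$, and then dominate $\sum_j(t_{j+1}+1)a^{-t_{j+1}}$ by the full series $\sum_{t\ge0}(t+1)(2/3)^t=9$, using that the residuals $t_{j+1}$ are distinct.
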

\begin{proof}
There are $P_j$ nodes at level $j$. Equation~\eqref{eq:localbounds} gives
$M/P_j\ge a^{t_{j+1}}$. The residuals $t_{j+1}$ are distinct nonnegative
integers, and hence
\[
 \frac IM\le\sum_j b_ja^{-t_{j+1}}
 \le C\sum_{t\ge0}(t+1)(2/3)^t=9C.
\]
For halving, $b_j\le t_{j+1}+1$.
\end{proof}

\begin{lemma}[One diagonal pass]\label{lem:pass}
Set $W_2=\sum_jP_{j-1}b_j^2 2^{b_j}$. Both candidate numerators at one
diagonal position can be computed in
$O(W_2+n^2(I+M))$ time and $O(n^2)$ bits of workspace.
\end{lemma}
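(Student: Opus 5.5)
We would organize one diagonal pass as a single depth-first traversal of the product cover from Theorem~\ref{thm:streamcover}, maintaining an incremental row-reduced form of the partial matrix $M_q[1:\ell,:]$ restricted to the columns inserted so far. Two right-hand sides are carried along the elimination: the candidate prefixes with $c_{\ell+1}=0$ and with $c_{\ell+1}=1$. Equivalently, we carry one augmented column for $c_{\le\ell}+d_q$ and note that the two candidates differ only in one bit of row $\ell+1$. By \eqref{eq:prefixCE}, a leaf $q$ contributes either $0$ or $2^{n-r}$ to each numerator. Here $r$ is the rank of the first $\ell+1$ rows of $M_q$, and consistency is read off from the reduced augmented column. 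So both numerators accumulate in $O(n)$-bit integers, as in Lemma~\ref{lem:diagonal}.

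The key design choice is to eliminate by columns rather than rows. We store an $(\ell+1)\times n$ echelon structure, in which each inserted column is reduced against stored pivot columns in $O(n^2)$ bit operations (or $O(n)$ word operations on packed rows), and we keep a log of its effect. When a traversal node at level $j$ is entered, we insert the $b_j$ columns of the chosen block center. These columns are given by Table~\ref{tab:affine} and depend only on $q_i$ and $B_i$. The vector $d_q$ is updated by adding $B_i$ for the coordinates with $q_i=0$. On return, we undo the insertions in reverse order using the log. This is a rollback, i.e.\ a stack of saved pivot states, and each saved state costs $O(n)$ bits per inserted column. The total number of insertions is $I$ by Lemma~\ref{lem:insertions}, each costing $O(n^2)$, which gives the $n^2I$ term. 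At each of the $M$ leaves, we need the rank and a consistency test for both augmented right-hand sides. Since all columns are already reduced, this is a single reduction of the augmented vector in $O(n^2)$ time, which gives the $n^2M$ term.

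The $W_2$ term pays for generating the block centers. At level $j$, each of the $P_{j-1}$ nodes scans $2^{b_j}$ subsets. Each subset needs an evenness test and, when it passes, the center of \eqref{eq:kwcenter}. Both take $O(b_j^2)$ time by computing $L\one_S$ on the block. This is exactly $\sum_j P_{j-1}b_j^22^{b_j}$.

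For space, the rollback stack holds at most $n$ inserted columns at once, since the block lengths sum to $n$. The echelon structure together with the logs therefore fits in $O(n^2)$ bits. The block graphs use $O(n^2)$ bits by Theorem~\ref{thm:streamcover}, and the loop stack uses $O(n)$ bits.

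The main obstacle is making column rollback cost no more than the insertion itself while keeping the space at $O(n^2)$. To handle this, we would store, for each inserted column, only its reduced form and its pivot row (or a marker that it is dependent). We would never modify previously stored columns: a new column is reduced against the old ones, but not vice versa. Removal is then just a pop. The price is that the leaf reduction of the augmented vector must go through the pivots in insertion order, which is still $O(n^2)$.
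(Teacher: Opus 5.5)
Your proposal is correct and follows the paper's proof: you keep an unreduced echelon basis of the inserted columns in which existing basis vectors never change, so rollback is a pop. You also maintain $d_q$ incrementally, reduce the two candidate right-hand sides in $O(n^2)$ at each of the $M$ leaves, charge the evenness tests to $W_2$, and fit the basis, graphs, and stack in $O(n^2)$ bits. The remaining differences are cosmetic: you order pivots by insertion rather than by pivot-row slots, your first paragraph writes $M_q[1:\ell,:]$ where you mean $\ell+1$ rows, and the parenthetical $O(n)$ word-operation bound would need $n$-bit words rather than the paper's $O(\log n)$-bit words, though none of this is needed.
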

\begin{proof}
At position $\ell$, maintain an unreduced echelon basis of the columns
already selected, restricted to rows $1,\ldots,\ell$. Reduce an inserted
vector against the stored pivot vectors. If its residue is nonzero, put it
in the previously empty slot of its pivot. Existing basis vectors never
change. An insertion takes $O(n^2)$ binary operations. Record the new
pivot slot on a stack; returning to a saved mark removes just those slots.
There are at most $n$ stored vectors at any instant. No matrix snapshot
is saved for each node.

Maintain $d_q$ by adding or removing $B[:,i]$ whenever a coordinate with
$q_i=0$ is entered or left. At a leaf with rank $r$, reduce the two
candidate right-hand sides against the basis. A zero residue contributes
$2^{n-r}$; a nonzero residue contributes zero. The two membership tests
take $O(n^2)$ time, and the $O(n)$-bit counters fit within this cost.
Evenness tests for the scanned subsets cost $O(W_2)$ in total.
The input, graphs, basis, shift, and rollback stack require $O(n^2)$ bits.
\end{proof}

\begin{proposition}[Explicit resource bounds]\label{prop:resources}
Algorithm~\ref{alg:main} can be implemented in
$O(n^4(3/2)^n)$ deterministic time and $O(n^2)$ bits of working space
on a RAM with $O(\log n)$-bit index words. Arithmetic on longer integers
is charged according to its bit length.
\end{proposition}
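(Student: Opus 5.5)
We bound the three phases of Algorithm~\ref{alg:main} separately: construction of the block graphs, the $n$ diagonal passes, and the final enumeration traversal. Throughout, $a=3/2$ and $M\le2na^n$ by~\eqref{eq:globalsize}.

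\textbf{Construction.} By Theorem~\ref{thm:streamcover}, building the block graphs costs $\operatorname{poly}(n)\sum_j2^{b_j}$. This is dominated by the outermost block, so it is at most $\operatorname{poly}(n)2^{\lceil n/2\rceil}=O(a^n)$ after absorbing the polynomial into the $(8/9)^{n/2}$ slack in~\eqref{eq:halving}. The graphs occupy $O(n^2)$ bits.

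\textbf{Diagonal selection.} We run the $n$ diagonal positions of Lemma~\ref{lem:diagonal}, each implemented with Lemma~\ref{lem:pass}, and compare exact integer numerators. Lemma~\ref{lem:pass} gives a per-position cost of $O(W_2+n^2(I+M))$. Lemma~\ref{lem:insertions} gives $I\le9M$, so the second term is $O(n^2M)=O(n^3a^n)$. For $W_2$, repeat the computation behind~\eqref{eq:scanbound} with the extra factor $b_j^2\le n^2$:
\[
W_2\le n^2\sum_jP_{j-1}2^{b_j}\le n^2\cdot\tfrac83na^n .
\]
This bound is $O(n^3a^n)$, which is too weak once multiplied by $n$ passes. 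We therefore refine it. Since $P_{j-1}\le2^{j-1}a^{n-t_j}$ and $2^{b_j}\le\frac43a^{t_j}(8/9)^{\lfloor t_j/2\rfloor}$, we get $P_{j-1}b_j^22^{b_j}\le\frac43a^n\,2^{j-1}t_j^2(8/9)^{\lfloor t_j/2\rfloor}$. Because $t_j\approx n/2^{j-1}$, the term $2^{j-1}t_j^2$ is $O(n\,t_j)$, and $\sum_j t_j(8/9)^{t_j/2}=O(1)$ since the $t_j$ are distinct. Hence $W_2=O(na^n)$. Each pass then costs $O(n^3a^n)$, and the $n$ passes together cost $O(n^4a^n)$. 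Numerators are at most $M2^n$, so they have $O(n)$ bits and their arithmetic is covered by the leaf cost.

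\textbf{Final traversal.} The final traversal uses the same rollback elimination as the passes, now on full rows. Its node costs are therefore again $O(W_2+n^2(I+M))=O(n^3a^n)$. At each leaf we stream the affine solutions from a particular solution and a nullspace basis. Lemma~\ref{lem:diagonal} guarantees at most $V(c)\le M$ solution visits in total. Each visit costs $O(n^2)$ for decoding and updating the solution, plus the ownership test of Lemma~\ref{lem:kw}, which is blockwise elimination costing $\sum_jO(b_j^3)=O(n^3)$. The P3 computation of Lemma~\ref{lem:p3} costs $O(n^3)$ by Gaussian elimination, and is charged only to owned visits. The visit costs thus total $O(n^3M)=O(n^4a^n)$.

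\textbf{Space.} At any time we keep the input, the block graphs, at most $n$ basis vectors, the rollback stack, the nested-loop subsets, one nullspace basis with its enumeration counter, and the workspace of one elimination. Each of these is $O(n^2)$ bits.

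\textbf{Main obstacle.} The hardest step is keeping the per-pass cost at $O(n^3a^n)$ rather than $O(n^4a^n)$. In particular, the $b_j^2$ factor in $W_2$ must be absorbed by the geometric decay, as sketched above. If the crude bound must be used instead, the fallback is to note that evenness tests cost $O(b_j^2)$ via incremental parity vectors, which can be reduced to $O(b_j)$ amortized by Gray-code subset scanning; this would give $W_2=O(n\cdot na^n)$ or better. The visit and leaf costs are comfortably within $O(n^4a^n)$.
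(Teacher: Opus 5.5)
Your proposal is correct and follows the paper's own accounting: graph construction absorbed via \eqref{eq:halving}, $n$ diagonal passes each costing $O(W_2+n^2(I+M))$ by Lemmas~\ref{lem:insertions} and~\ref{lem:pass}, and a final traversal with at most $V(c)\le M=O(na^n)$ visits at $O(n^3)$ apiece. Your one misstep is calling the crude bound $W_2\le n^2W=O(n^3a^n)$ too weak, since it matches the $n^2(I+M)=O(n^3a^n)$ term each pass pays anyway, so $n$ passes already cost $O(n^4a^n)$, exactly as the paper concludes; your refinement to $W_2=O(na^n)$ is valid but unnecessary, and likewise the final traversal need not reuse rollback (whose structure in Lemma~\ref{lem:pass} yields only ranks and membership tests unless augmented with column-dependency tags), because re-solving each of the $M$ systems in $O(n^3)$ already totals $O(n^4a^n)$.
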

\begin{proof}
Use ordinary binary elimination. A system of order $n$ costs $O(n^3)$
time and $O(n^2)$ bits; fixed variables in P3 are first substituted out.
For a block of size $b$, the component tests in~\eqref{eq:localCE} take
$O(b^2)$ time per subset, at $O(b^2)$ edge decisions. Thus all graph
construction costs
\[
 O\!\left(\sum_j b_j^4 2^{b_j}\right)
 \le O\!\left(n^4 2^{\lceil n/2\rceil}\right)
 =O(n^4a^n),
\]
where $\sum_jb_j^4\le n^4$ and~\eqref{eq:halving} account for construction
without an extra factor for the number of blocks.

Theorem~\ref{thm:streamcover} gives $M,W=O(na^n)$, where $W$ is
the subset count in~\eqref{eq:scanW}. Therefore $W_2\le n^2W=O(n^3a^n)$.
Lemmas~\ref{lem:insertions} and~\ref{lem:pass} bound a diagonal pass by
$O(n^3a^n)$ and all $n$ passes by $O(n^4a^n)$.

The final traversal has $M$ affine systems and at most $V(c)\le M$
solution visits. Ordinary affine solving, ownership, and P3 each cost
at most $O(n^3)$ per relevant center or visit. Streaming from a stored
nullspace basis costs at most $O(n^2)$ per solution. The final cost is
therefore $O(n^3(M+V(c))+W_2)=O(n^4a^n)$.

The input, block graphs, one rollback basis, and a constant number of
linear-algebra workspaces occupy $O(n^2)$ bits. Subsets and centers use
$O(n)$ bits; pivot indices and marks use $O(n\log n)$ bits; exact counters
use $O(n)$ further bits. This proves the bound without local center tables.
\end{proof}

The gain comes from sharing columns during diagonal selection. It does
not assume a quadratic P3 update under a Gray-code step: a nullspace basis
vector can be dense, changing many entries of the weight system.

\section{Solving and listing affine product constraints}
\label{sec:listing}

\begin{theorem}\label{thm:general}
Let $u_i,v_i:\F^N\to\F$ be affine forms, $i\in[m]$, and let $L$ be the
number of solutions of $u_i(\xi)v_i(\xi)=0$ for all $i$. All solutions
can be listed without repetition in polynomial working space and expected
time $\operatorname{poly}(N,m)((3/2)^m+L)$ by a Las Vegas algorithm.
\end{theorem}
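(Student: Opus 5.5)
We reduce the general problem to the ternary-cover framework of Section~\ref{sec:parity}, with a random ternary translation of the cover replacing the random diagonal. Each solution $\xi$ of $u_i(\xi)v_i(\xi)=0$ is assigned a ternary state $s(\xi)\in\Zthree^m$ by coordinate $i$: state $0$ if $(u_i,v_i)=(0,0)$, state $1$ if $(u_i,v_i)=(1,0)$, and state $2$ if $(u_i,v_i)=(0,1)$. So $(u_i(\xi),v_i(\xi))\in\Sigma$ for all $i$, exactly as in~\eqref{eq:sigma}. For a center $q$, the condition $s(\xi)\in Q_q$ is affine in $\xi$, by the same reasoning as Table~\ref{tab:affine}. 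Concretely, if $q_i=0$ we impose $u_i+v_i=1$; if $q_i=1$ we impose $u_i=0$; if $q_i=2$ we impose $v_i=0$. This gives an affine system $S_q$ of $m$ equations in $N$ unknowns. Its solution set is exactly the set of solutions $\xi$ with $s(\xi)\in Q_q$.

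Take the cover $\calQ_m$ of Theorem~\ref{thm:streamcover}. Choose a uniformly random $r\in\Zthree^m$ and use the translated cover $\{q+r:q\in\calQ_m\}$. Translation preserves the cover property, since $Q_{q+r}=Q_q+r$. Ownership also transfers: set $\Own^r(s)=\Own(s-r)+r$. The algorithm regenerates each center $q+r$, solves $S_{q+r}$ by Gaussian elimination, and streams its solutions from a particular solution and a nullspace basis. A solution $\xi$ is output only if $\Own^r(s(\xi))=q+r$. Correctness and freedom from duplicates hold for every $r$, because each solution lies in at least one translated subcube and has exactly one owner. Space is polynomial by Theorem~\ref{thm:streamcover}.

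For the time bound, total work is $\operatorname{poly}(N,m)(M+V)$, where $V$ is the total number of affine solution visits over all centers. Write $V=\sum_\xi\#\{q\in\calQ_m:s(\xi)-r\in Q_q\}$. Since $s(\xi)-r$ is uniform on $\Zthree^m$ for each fixed $\xi$, we get
\[
 \E_r V=\sum_\xi\frac{\sum_{q}|Q_q|}{3^m}=L\cdot\frac{M2^m}{3^m}\le 2mL,
\]
using~\eqref{eq:globalsize}. Each affine system also costs $\operatorname{poly}(N,m)$ even when it is empty, and there are $M\le2m(3/2)^m$ systems. The expected total is therefore $\operatorname{poly}(N,m)((3/2)^m+L)$. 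The algorithm is Las Vegas since the output is always correct and only the running time is random.

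The main obstacle is one degenerate point in the averaging argument. A nonempty system $S_q$ has $2^{N-\rank}$ solutions, and if the forms are degenerate these may all be solutions $\xi$ with the same ternary state. The expectation computation above already handles this, since it sums over solutions $\xi$ rather than over states. I would check that the per-visit cost stays polynomial: decoding, the ownership test and streaming are all polynomial, so this holds. Unlike Lemma~\ref{lem:diagonal}, the right-hand side cannot be shifted here, which is why the randomness is placed in the cover translation instead.
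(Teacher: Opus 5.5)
Your proposal is correct and follows the paper's proof essentially step for step. It uses the same exclusion equations $u_i+v_i=1$, $u_i=0$, $v_i=0$, a uniform ternary translation of $\calQ_m$ with owner $\Own(s-r)+r$, and the visit bound $\E_r V=|\calQ_m|(2/3)^mL\le 2mL$ (you derive it from uniformity of $s(\xi)-r$ rather than coordinatewise independence, which is the same computation). The paper adds only two details. It treats the case $m=0$ separately, since Theorem~\ref{thm:streamcover} does not apply there and one simply streams all $2^N=L$ assignments. It also explicitly charges the cost of constructing and traversing the cover, which Theorem~\ref{thm:streamcover} bounds by $\operatorname{poly}(m)(3/2)^m$.
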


At each coordinate
the legal pair $(u_i(\xi),v_i(\xi))$ belongs to $\Sigma$.
Excluding symbols $0,1,2$, respectively, is equivalent to imposing
\begin{equation}\label{eq:generalrows}
 u_i(\xi)+v_i(\xi)=1,\qquad
 u_i(\xi)=0,\qquad
 v_i(\xi)=0.
\end{equation}
Each equation enforces legality as well as exclusion.
The first permits $(1,0)$ and $(0,1)$; the second permits $(0,0)$ and
$(0,1)$; the third permits $(0,0)$ and $(1,0)$.
Thus each center gives an affine system in the original $N$ variables.

\begin{corollary}[Deterministic feasibility and search]\label{cor:decision}
Given the affine forms of Theorem~\ref{thm:general}, one can find a solution
or correctly report that none exists in polynomial working space and
deterministic time
\[
 \operatorname{poly}(N,m)(3/2)^m.
\]
\end{corollary}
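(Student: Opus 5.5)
The plan is to use the regenerable cover $\calQ_m$ of Theorem~\ref{thm:streamcover} directly, with no random translation. Since we only need one solution or a certificate of infeasibility, repeated visits to the same solution cost nothing. Every solution $\xi$ of the product system has a ternary pattern $s(\xi)\in\Zthree^m$, namely the symbol of $(u_i(\xi),v_i(\xi))\in\Sigma$. This pattern lies in some $Q_q$ with $q\in\calQ_m$, because $\calQ_m$ covers $\Zthree^m$. By \eqref{eq:generalrows}, $\xi$ then satisfies the affine system attached to $q$. Conversely, any solution of a center's affine system satisfies all $m$ product constraints, since each row enforces legality at its coordinate. Hence the product system is feasible if and only if at least one of the $M$ affine systems is consistent.

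The algorithm is as follows. First construct the representation of $\calQ_m$. Then enumerate its centers once. For each center $q$, form the $m\times N$ affine system given by \eqref{eq:generalrows}, with the row for coordinate $i$ chosen according to $q_i$, and run Gaussian elimination. At the first consistent system, output its particular solution, with free variables set to zero, and stop. If no system is consistent, report infeasibility. Correctness is exactly the two-way implication above.

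For resources, Theorem~\ref{thm:streamcover} gives construction and one complete traversal in $\operatorname{poly}(m)(3/2)^m$ time and polynomial space. It also gives $M\le2m(3/2)^m$ centers. Each elimination costs $\operatorname{poly}(N,m)$ time and $O(Nm)$ bits, and only the current center and its workspace are retained. The total is therefore $\operatorname{poly}(N,m)(3/2)^m$ deterministic time and polynomial working space.

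The only point needing care is that Theorem~\ref{thm:streamcover} is stated for the cover of $\Zthree^m$ independent of any input. We are applying it with $m$ equal to the number of constraints, not the number of variables. That is legitimate because the cover construction never looks at the affine forms, so I expect no real obstacle here. The degenerate case $m=0$ is trivial: output $\xi=0$.
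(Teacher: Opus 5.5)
Your proposal is correct and matches the paper's proof: you enumerate the deterministic cover $\calQ_m$ of Theorem~\ref{thm:streamcover}, test each center's system \eqref{eq:generalrows} for consistency, and justify correctness both ways (each row implies its product constraint, and every solution's legal state lies in some covering cube), with $m=0$ handled separately. Your resource bound, cover traversal plus one polynomial-cost elimination per center, is also the paper's.
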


\begin{proof}
For $m\ge1$, construct the deterministic cover $\calQ_m$ of
Theorem~\ref{thm:streamcover}. For each center, form the affine system
\eqref{eq:generalrows} and test consistency. If it is consistent, return
one particular solution; each of its equations implies the corresponding
product constraint. Conversely, every solution of the product system has
a legal state in some covering cube, so at least one of these affine
systems is consistent whenever a solution exists. If all systems are
inconsistent, report that there is no solution. Cover generation and one
elimination per center give the bounds. For $m=0$, return any assignment.
\end{proof}

This search algorithm needs only consistency tests. Complete enumeration
also pays for every solution of every covering system. Since the product
constraints here are fixed input, the diagonal choice of
Section~\ref{sec:parity} is unavailable. We instead translate the entire
cover. For $m\ge1$, construct $\calQ_m$ and draw a uniform
$t\in\Zthree^m$. Use centers $q+t$, $q\in\calQ_m$, where arithmetic
is coordinatewise modulo three. For a legal state $s$, assign it to
the translated center
\begin{equation}\label{eq:shiftowner}
 \Own_t(s)=\Own(s-t)+t.
\end{equation}
Translation preserves coordinatewise inequality, so this is a covering
center for $s$.

\begin{lemma}[Translation controls repeated visits]\label{lem:shift}
For a fixed affine product system with $L$ solutions, let $I(t)$ be the
sum of the numbers of solutions of all affine systems associated with
the translated centers. Then
\begin{equation}\label{eq:translation}
 \E_t I(t)=|\calQ_m|(2/3)^mL\le2mL.
\end{equation}
\end{lemma}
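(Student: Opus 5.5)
The plan is to write $I(t)$ as a double sum over legal states and translated centers, and then use linearity of expectation. By \eqref{eq:generalrows}, a solution $\xi$ of the system for a translated center $q+t$ satisfies every product constraint. Its state $s(\xi)=(u_i(\xi),v_i(\xi))_i$ is legal and lies in $Q_{q+t}$.

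Conversely, each solution $\xi$ of the product system whose state lies in $Q_{q+t}$ satisfies the affine system of $q+t$. This holds because each row of \eqref{eq:generalrows} is exactly the condition that the legal pair avoids the excluded symbol. So the solutions of the affine system for center $q+t$ are precisely the product solutions $\xi$ with $s(\xi)\in Q_{q+t}$. Here $\xi$ ranges over $\F^N$, so distinct $\xi$ with the same state count separately, which matches $L$ counting solutions $\xi$. Therefore
\[
 I(t)=\sum_{q\in\calQ_m}\sum_{\xi}\bigl[s(\xi)\in Q_{q+t}\bigr],
\]
where $\xi$ ranges over the $L$ solutions of the product system.

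Next, fix $q$ and $\xi$ and compute the probability over uniform $t\in\Zthree^m$. The event is $s(\xi)_i\ne q_i+t_i$ for every $i$. Since the coordinates $t_i$ are independent and uniform on $\Zthree$, each coordinate event has probability $2/3$ and the events are independent. The probability is therefore exactly $(2/3)^m$. Summing over the $|\calQ_m|$ centers and $L$ solutions gives $\E_tI(t)=|\calQ_m|(2/3)^mL$. Finally, \eqref{eq:globalsize} of Theorem~\ref{thm:streamcover} gives $|\calQ_m|\le2m(3/2)^m$, which yields the bound $2mL$.

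The main point needing care is the exact correspondence in the first step: an affine system for a center may have solutions, and we must confirm that it has no spurious ones and no missing ones. Spurious solutions would be $\xi$ that satisfy the row but violate legality. They are ruled out because each row in \eqref{eq:generalrows} forces the pair into the two allowed states of $\Sigma$, as noted after that display. Nothing else is delicate. Note that the expectation uses only that the translation is uniform and independent of the fixed system; it needs no property of $\calQ_m$ beyond its cardinality.
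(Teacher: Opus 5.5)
Your proposal is correct and follows essentially the same route as the paper: linearity of expectation over the $|\calQ_m|L$ pairs of center and solution, each pair contributing probability $(2/3)^m$ by coordinatewise independence of $t$, followed by \eqref{eq:globalsize}. Your explicit check that the affine system for $q+t$ has exactly the product solutions whose state lies in $Q_{q+t}$ spells out a correspondence that the paper states in one line.
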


\begin{proof}
For each fixed center $q$ and each fixed solution $\xi$, let $s(\xi)$
be its legal state. The affine system for $q+t$ contains $\xi$ exactly
when $q_i+t_i\ne s_i(\xi)$ for all $i$. Each condition has probability
$2/3$, independently across coordinates. Summing this probability over
the $|\calQ_m|L$ pairs proves the equality, and~\eqref{eq:globalsize}
proves the inequality. No independence between different centers or
different solutions is required.
\end{proof}

\begin{proof}[Proof of Theorem~\ref{thm:general}]
Construct $\calQ_m$ and choose $t$ as above. Regenerate the original
centers $q$ one at a time, form the affine system~\eqref{eq:generalrows}
for $q+t$, and stream its solutions $\xi$. Compute $s(\xi)$ and output
$\xi$ if and only if $q=\Own(s(\xi)-t)$.

Every affine solution is a solution of the input product system by
\eqref{eq:generalrows}. For every solution $\xi$, its state belongs
to the translated cover, and its owner specifies exactly one affine
system in which it is output. Distinct assignments may have the same
state; they are still distinct solutions of that affine system and are
each output once. Correctness therefore holds for every translation $t$.

Construction, center generation, and elimination cost
$\operatorname{poly}(N,m)(3/2)^m$ in total. Each of the $I(t)$ visits
requires polynomial additional time, so Lemma~\ref{lem:shift} yields
the claimed expected bound. Uniform ternary symbols can be generated
from fair bits by rejection, with constant expected work per symbol.
The cover, translation, current affine system, and streamed solution
require polynomial space. If $m=0$, simply stream all $2^N=L$
assignments. This also satisfies the stated bound.
\end{proof}

The only randomness in the enumeration algorithm is the translation; it
affects running time alone. Feasibility, in contrast, is already deterministic
by Corollary~\ref{cor:decision}.

\section{Limits of complete enumeration}
\label{sec:limits}

There are two different barriers for the present method. A universal legal
affine cover needs at least $(3/2)^n$ pieces. More strongly, some input
fibers contain that many solutions even after the best diagonal choice.

\begin{proposition}\label{prop:coverlower}
The minimum number of affine subspaces wholly contained in $\Sigma^m$
whose union is $\Sigma^m$ equals the binary-subcube covering number.
In particular it is at least $\lceil(3/2)^m\rceil$.
\end{proposition}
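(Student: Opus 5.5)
Two facts suffice: every affine subspace of $\F^{2m}$ contained in $\Sigma^m$ lies inside some binary subcube $Q_q$, and binary subcubes are themselves such subspaces. Given both, any cover by contained affine subspaces can be replaced, piece by piece, by subcubes of the same number, and conversely. So the two minima agree. The numerical bound then follows by counting.

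\emph{Step 1: reduce to coordinates.} Let $W\subseteq\Sigma^m$ be an affine subspace. I would fix a coordinate $i$ and consider the projection $W\to\F^2$ onto the pair $(x_i,y_i)$. The image of an affine subspace under a linear map is again an affine subspace of $\F^2$. Here the image lies in $\Sigma$, which has three points. The affine subspaces of $\F^2$ have $1$, $2$ or $4$ points, so the image has at most two points. Hence it misses at least one symbol of $\Sigma$; call it $q_i$, choosing any missing symbol. Then every $s\in W$ has $s_i\ne q_i$ for all $i$, that is, $W\subseteq Q_q$. This step is the only place where the structure of $\Sigma$ is used.

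\emph{Step 2: the converse.} The paper already observed in Section~\ref{sec:parity} that each $Q_q$ is an $m$-dimensional affine subspace of $\F^{2m}$, being a product of affine lines. Each $Q_q$ is also contained in $\Sigma^m$. So any subcube cover is an admissible affine cover of the same size, and the two minima are equal.

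\emph{Step 3: counting bound.} Each subcube $Q_q$ has $2^m$ points and $|\Sigma^m|=3^m$. A cover by $M$ subcubes therefore needs $M2^m\ge3^m$, so $M\ge(3/2)^m$. Since $M$ is an integer, $M\ge\lceil(3/2)^m\rceil$.

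No step looks like a genuine obstacle. The one point needing care is Step 1: the coordinatewise projection must be taken as a map of affine spaces, so that its image in $\F^2$ is affine, and one must use that no affine subspace of $\F^2$ has exactly three points.
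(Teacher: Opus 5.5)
Your proposal is correct and matches the paper's own argument: you project each coordinate pair onto an affine subset of $\F^2$ with at most two legal points, enlarge to a binary subcube containing the piece, use the converse inclusion, and conclude with the volume bound $3^m\le M2^m$. One small wording point: distinct pieces may be replaced by the same subcube, so the result is a subcube cover of no greater size rather than the same size, which is all the equality of minima needs.
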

\begin{proof}
Each coordinate-pair projection of a nonempty affine subspace is affine,
so it has one, two, or four points. If it is contained in $\Sigma$, it has
at most two. Enlarge each such projection to two legal points; their
product is a binary subcube containing the original subspace. This converts
any legal affine cover to a binary-subcube cover of no greater size.
Conversely, binary subcubes are legal affine spaces. The volume bound is
$3^m\le M2^m$.
\end{proof}

The proposition concerns the entire legal state space, not one input fiber.
To bound complete enumeration of a fiber, put $a=3/2$ and recall $K_B(c)$
from~\eqref{eq:PB}. For every fixed $B$,
\begin{equation}\label{eq:fixedBmean}
 \sum_{c\in\F^n}K_B(c)=\sum_{x\in\F^n}2^{n-|x|}=3^n,
 \qquad \min_cK_B(c)\le a^n.
\end{equation}

\begin{theorem}\label{thm:lower}
For every $n\ge36$ there is a strongly connected zero-diagonal digraph
$B$ with $\min_cK_B(c)\ge(4/27)a^n$. Moreover,
\[
 \left(\frac{1024}{2187}-o(1)\right)a^n
 \le\max_{\substack{B\text{ zero-diagonal}\\ B\text{ strongly connected}}}
       \min_cK_B(c)\le a^n.
\]
\end{theorem}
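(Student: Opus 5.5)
The plan is to establish the upper bound from \eqref{eq:fixedBmean} and to get the lower bounds from an exact Fourier expansion of $K_B(c)$. For every fixed $B$ the average of $K_B(c)$ over $c$ is $3^n/2^n=a^n$, so the minimum is at most $a^n$; this holds uniformly and gives the right-hand inequality. For the lower bounds I would write
\[
 K_B(c)=\sum_{x}\prod_{i\in\supp x}\tfrac12\bigl(1+(-1)^{(Bx)_i+c_i}\bigr)
 =\sum_{S\subseteq[n]}(-1)^{c\cdot\one_S}\,T_B(S),
\]
with
\[
 T_B(S)=\sum_{x\supseteq S}2^{-|x|}(-1)^{(B^T\one_S)\cdot x}.
\]
The sum defining $T_B(S)$ factorizes over coordinates. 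Put $w=B^T\one_S$, and note that $w_i\in\{0,1\}$. A coordinate $i\in S$ contributes $\tfrac12(-1)^{w_i}$. A coordinate $i\notin S$ contributes $\tfrac32$ if $w_i=0$ and $\tfrac12$ if $w_i=1$. Hence $T_B(\varnothing)=a^n$ and
\[
 |T_B(S)|=2^{-|S|}\,(3/2)^{z_S}\,(1/2)^{n-|S|-z_S},
\]
where $z_S$ counts the coordinates $i\notin S$ with $w_i=0$. The goal is therefore a $B$ for which $\sum_{S\ne\varnothing}|T_B(S)|\le(1-\rho)a^n$. This gives $\min_cK_B(c)\ge\rho a^n$ for every $c$ at once, with no union bound over $c$. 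We cannot instead take a disjoint union of gadgets: $\min_c$ of a product of independent blocks loses a constant factor per block, because no block can have $K$ constant in $c$ since $2^k\nmid3^k$.

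Next I would try the complete digraph $B=J-I$ as a starting point. There $w=|S|\one+\one_S$, so off $S$ every $w_i$ equals $|S|\bmod 2$. Odd $S$ give the tiny terms $2^{-n}$. Even $S$ give $2^{-|S|}(3/2)^{n-|S|}$, which summed over even $S$ is of order $a^n$, far too much. The fix I would pursue is to break the parity symmetry with a small structured core: reserve a constant number $r$ of special vertices, and let every other vertex $i$ have arcs to all non-special vertices and to a chosen subset of the special ones. The choice should make $B^T\one_S$ have many ones outside $S$ except when $S$ lies inside the core. The surviving error then comes from sets $S$ essentially inside a bounded core, so $1-\rho$ is a finite, explicitly computable sum. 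Optimizing the core size and the arc pattern is where I expect the constant $1024/2187=2^{10}/3^7$ to appear, with $r\approx7$ special coordinates each losing a factor $2/3$ against $4^{\cdot}$ gains. The weaker $4/27=2^2/3^3$ should come from a smaller core, chosen so that all error terms can be bounded cleanly once $n\ge36$.

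Strong connectivity is nearly automatic: all non-special vertices form a complete digraph, and I would add arcs between the core and the bulk, accounting for them in the $w$ vectors. The $-o(1)$ in the asymptotic bound absorbs the terms from sets $S$ of size $\Theta(n)$ that meet the core in many ways. Each such term is exponentially smaller than $a^n$ because of the $(1/2)^{\#\{w_i=1\}}$ factor, and a geometric sum bounds their total.

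The main obstacle is the even-$S$ terms: designing $B$ so that $B^T\one_S$ has many ones off $S$ for \emph{every} nonempty $S$ outside a bounded exceptional family, while the exceptional family contributes exactly the computed constant. I would first verify the small-core construction by direct enumeration of the exceptional sets. Only after that would I tune the core to reach the $1024/2187$ constant, and check the threshold $n\ge36$ by bounding the tail sum explicitly.
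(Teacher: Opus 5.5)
\textbf{There is a genuine gap.} Your upper bound by averaging is correct. The Fourier expansion and the formula $|T_B(S)|=2^{-n}3^{z_S}$ are also correct. The gap is the reduction itself: no zero-diagonal $B$ satisfies $\sum_{S\ne\varnothing}|T_B(S)|\le(1-\rho)a^n$ with $\rho$ bounded away from zero. So the ``bounded exceptional family'' you hope to design cannot exist.

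To see this, write $Z_S=\{i\notin S:(B^T\one_S)_i=0\}$ and $3^{z_S}=\sum_{U\subseteq Z_S}2^{|U|}$, and swap the sums. Fix $U$ with $|U|=k<n/2$. The linear map $\one_S\mapsto(B^T\one_S)|_U$ from $\F^{[n]\setminus U}$ to $\F^U$ has kernel of dimension at least $n-2k$. Hence, whatever $B$ is, at least $2^{n-2k}-1$ nonempty sets $S$ satisfy $Z_S\supseteq U$. This gives
\[
 \sum_{S\ne\varnothing}|T_B(S)|
 \ge\sum_{k<n/2}\binom nk\bigl(2^{-k}-2^{k-n}\bigr)
 =a^n\Bigl(\Pr\bigl[\operatorname{Bin}(n,\tfrac13)<\tfrac n2\bigr]
  -\Pr\bigl[\operatorname{Bin}(n,\tfrac23)<\tfrac n2\bigr]\Bigr),
\]
which is $(1-e^{-\Omega(n)})a^n$. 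So the triangle inequality certifies at most $e^{-\Omega(n)}a^n$ for every digraph. Already at $n=36$ this is about $0.04\,a^n<(4/27)a^n$.

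The sets $S$ with many zeros of $B^T\one_S$ off $S$ are forced by dimension counting. They are not an artifact of the parity symmetry of $J-I$ that a small core could break. Any argument must exploit cancellation among the signed terms, which is exactly what taking absolute values discards.

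The paper avoids certifying one explicit graph fiber by fiber and instead argues existentially:
\begin{itemize}
\item Over random $(D,c)$ the indicators $I_x$ are pairwise independent, so the variance is at most $\mu=a^m$.
\item Chebyshev's inequality then gives a $D$ with at most $\varepsilon^{-2}(4/3)^m$ bad syndromes $F$.
\item A linear $T:\F^r\to\F^m$ with no nonzero $Tz$ in $F+F$ exists once $(2^r-1)\varepsilon^{-4}(8/9)^m<1$.
\item For $\widetilde B=\begin{pmatrix}D&T\\0&G\end{pmatrix}$ one has $K_{\widetilde B}(c,d)=\sum_{z\in\mathcal P_G(d)}K_D(c+Tz)$ with at most one bad term. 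Hence every syndrome has at least $(k_G-1)(1-\varepsilon)a^m$ solutions, where $k_G=\min_dK_G(d)$.
\item A hub bidirected to all vertices gives strong connectivity. Setting $x_h=0$ keeps the bound for either value of the hub's diagonal bit.
\end{itemize}
A bidirected-edge controller ($r=2$, $k_G=2$, $\varepsilon=1/2$) yields $\tfrac12a^{n-3}=(4/27)a^n$. Two bidirected triangles ($r=6$, $k_G=9$, $\varepsilon=1/m$) yield $8(1-1/m)a^{n-7}$. The constants are thus $(k_G-1)(2/3)^{r+1}(1-\varepsilon)$, not the residue of a summed Fourier tail.
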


\begin{lemma}[Joint second moment]\label{lem:moment}
Let $D$ be a random zero-diagonal $m\times m$ matrix whose off-diagonal
entries are independent uniform bits, and independently let $c$ be
uniform in $\F^m$. With $\mu=(3/2)^m$,
\begin{equation}\label{eq:jointmoment}
 \E_{D,c}K_D(c)=\mu,\qquad
 \Var_{D,c}K_D(c)=\mu-(5/4)^m\le\mu.
\end{equation}
\end{lemma}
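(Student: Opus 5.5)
The plan is to write $K_D(c)=\sum_{x\in\F^m}\one[x\in\mathcal P_D(c)]$ and compute the first two moments by summing over single vectors and over ordered pairs $(x,x')$. Both steps use independence across coordinates $i$: the condition at coordinate $i$ involves only $c_i$ and row $i$ of $D$. These are independent across $i$ because $c$ is uniform, the rows of $D$ are disjoint sets of independent bits, and $D$ is independent of $c$.

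\emph{First moment.} Fix $x$ with support $X$, $|X|=k$. The event $x\in\mathcal P_D(c)$ is the conjunction of the conditions $(Dx)_i=c_i$ for $i\in X$. Since $c_i$ is uniform and independent of $D$, each condition has probability $1/2$, and they are independent across $i$. So the probability is $2^{-k}$. Summing over $x$ gives $\sum_k\binom mk2^{-k}=(3/2)^m=\mu$, the same computation as in~\eqref{eq:fixedBmean} but now averaged jointly.

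\emph{Second moment.} Fix $x\ne x'$ with supports $X,X'$, and consider the joint event, coordinate by coordinate.
\begin{itemize}
\item For $i\in X\setminus X'$ or $i\in X'\setminus X$, exactly one condition constrains $c_i$, which contributes a factor $1/2$.
\item For $i\in X\cap X'$, the two conditions $(Dx)_i=c_i$ and $(Dx')_i=c_i$ are equivalent to $c_i=(Dx)_i$ together with $(D(x+x'))_i=0$. The first contributes $1/2$ through $c_i$.
\item The second condition is $\sum_{j\in X\triangle X'}D_{ij}=0$. Here $i\notin X\triangle X'$ because $i\in X\cap X'$, so only off-diagonal entries appear. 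The set $X\triangle X'$ is nonempty, so this is a uniform bit, independent of $c_i$. It therefore contributes another factor $1/2$, and these factors are independent across rows.
\end{itemize}
The joint probability is thus $2^{-|X\cup X'|-|X\cap X'|}=2^{-|X|}2^{-|X'|}$. So distinct pairs are exactly uncorrelated.

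This pairwise-independence step, and in particular the check that the symmetric-difference bit is independent of the $c_i$ conditions and across rows, is the only delicate point. The zero diagonal is what makes it work: $D_{ii}$ never appears in the sum, because $i\notin X\triangle X'$.

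\emph{Assembling.} We get
\[
\E K^2=\sum_{x\ne x'}2^{-|x|}2^{-|x'|}+\sum_x2^{-|x|}=\mu^2-\sum_x4^{-|x|}+\mu=\mu^2-(5/4)^m+\mu .
\]
Subtracting $\mu^2$ gives $\Var K=\mu-(5/4)^m\le\mu$, as claimed in~\eqref{eq:jointmoment}.
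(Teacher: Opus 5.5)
Your proof is correct and follows essentially the same route as the paper: both show the indicators are pairwise independent by working row by row, using that row $i$'s event depends only on $c_i$ and row $i$ of $D$, and then sum the individual variances $2^{-|x|}-4^{-|x|}$. The paper packages the row computation by passing to $A=D+\diag(c+\one)$, whose entries are all independent and uniform, and invoking linear independence of distinct nonzero $x,z$; you verify the same two independent conditions directly, via $c_i=(Dx)_i$ and $(D(x+x'))_i=0$.
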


\begin{proof}
The matrix $A=D+\diag(c+\one)$ has all its entries independent and
uniform. Let $I_x$ indicate that $x\circ(Ax)=x$.
For each active row $i\in\supp x$, feasibility requires the one
nontrivial equation $A[i,:]x=1$. Different rows are independent, so
$\E I_x=2^{-|x|}$, also for $x=0$, when $I_0=1$.

If $x,z$ are distinct and nonzero, they are linearly independent over
$\F$. A row active in both feasibility events then imposes two
independent linear conditions; a row active in just one imposes one.
Independence across rows gives
$\E(I_xI_z)=2^{-|x|-|z|}=\E I_x\E I_z$.
The constant $I_0$ has zero covariance with every variable.
Thus the indicators are pairwise independent, and
\[
 \Var\sum_x I_x
 =\sum_x\bigl(2^{-|x|}-4^{-|x|}\bigr)
 =(3/2)^m-(5/4)^m.
\]
The expectation follows by the same binomial sum.
\end{proof}

\begin{lemma}[A control space avoiding bad fibers]\label{lem:controlspace}
Let $r,m\ge1$ and $0<\varepsilon<1$ satisfy
$(2^r-1)\varepsilon^{-4}(8/9)^m<1$. There are a zero-diagonal matrix
$D\in\F^{m\times m}$ and a linear map $T:\F^r\to\F^m$ such that,
for every $c$, at most one $z\in\F^r$ satisfies
$K_D(c+Tz)<(1-\varepsilon)a^m$.
\end{lemma}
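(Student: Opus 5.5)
The plan is a first-moment argument over a random pair $(D,T)$, with $D$ as in Lemma~\ref{lem:moment} and $T$ a uniformly random $m\times r$ matrix, independent of $D$. Call $c$ \emph{bad for $D$} if $K_D(c)<(1-\varepsilon)a^m$. The conclusion fails exactly when some $c$ and some nonzero $z$ (put $w=Tz$) make both $c$ and $c+w$ bad. The event $w=0$ is included here, and it counts as a failure, since then $z$ and $0$ both give the bad point $c$. Let $Y$ be the number of triples $(c,z,\cdot)$ with $z\ne0$ for which $c$ and $c+Tz$ are both bad. It suffices to show $\E Y<1$; then some $(D,T)$ has $Y=0$, which is exactly the statement.

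For fixed nonzero $z$, the vector $Tz$ is uniform on $\F^m$ and independent of $D$. So for uniform $c$ the points $c$ and $c+Tz$ are independent and uniform given $D$. Write $p(D)=\Pr_c[c\text{ bad for }D]$. Then
\[
 \E Y=2^m(2^r-1)\,\E_D\bigl[p(D)^2\bigr].
\]
To bound $p(D)$, I use the fact that $\E_cK_D(c)=a^m$ holds exactly for every fixed $D$, by~\eqref{eq:fixedBmean}. Chebyshev's inequality over $c$ alone then gives
\[
 p(D)\le\frac{V(D)}{\varepsilon^2a^{2m}},\qquad V(D):=\Var_cK_D(c).
\]
Hence $\E_D p(D)^2\le\E_D[V(D)^2]/(\varepsilon^4a^{4m})$. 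If I can show $\E_D[V(D)^2]\le a^{2m}$, then
\[
 \E Y\le(2^r-1)\,\varepsilon^{-4}\,2^ma^{-2m}=(2^r-1)\,\varepsilon^{-4}(8/9)^m<1,
\]
which is precisely the hypothesis. The power $\varepsilon^{-4}$ and the factor $(8/9)^m=2^m(4/9)^m$ both match this route, which suggests it is the intended one.

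The main obstacle is the bound $\E_D[V(D)^2]\le a^{2m}$. The first moment is already available. Since $\E_cK_D=a^m$ is constant in $D$, Lemma~\ref{lem:moment} gives $\E_DV(D)=\Var_{D,c}K=a^m-(5/4)^m$. I need a second moment of this conditional variance.

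First I would write $V(D)=\sum_{x\ne z}\bigl(\Pr_c[I_xI_z]-2^{-|x|-|z|}\bigr)+\sum_x(2^{-|x|}-4^{-|x|})$ for fixed $D$. The diagonal part is at most $a^m$ deterministically, so it contributes at most $O(a^{2m})$ to $\E V^2$.

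Next I would handle the off-diagonal covariance terms $C_{x,z}(D)$. These have mean zero over $D$, by the pairwise independence in Lemma~\ref{lem:moment}. The remaining task is to compute $\E_D\bigl[\sum C_{x,z}\bigr]^2$ as a sum over quadruples $(x,z,x',z')$, sorted by the linear dependencies among them. Quadruples in general position should give zero or negligible contributions, by the same row-by-row independence argument. The dependent configurations, where $x'$ and $z'$ lie in the span of $x$ and $z$, should give at most the order of $a^{2m}$, via binomial sums like those in Lemma~\ref{lem:moment}.

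If the constant does not come out as exactly $1$, I would absorb it by tightening which points count as bad, or by checking that the stated hypothesis leaves slack. I expect this fourth-order bookkeeping to be the hardest step.
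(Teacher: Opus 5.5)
\textbf{There is a genuine gap.} It sits in the step you flag as the main obstacle. The bound $\E_D[V(D)^2]\le a^{2m}$ is not merely unproven; it is false for large $m$, so no fourth-order bookkeeping can supply it.

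To see this, group the pairs $(x,z)$ in $\E_cK_D(c)^2$ by $w=x+z$ and $S=x\cap z$. The $2^{|w|}$ ordered splits of $\supp w$ cancel the weight $2^{-|w|}$, which gives the exact formula
\[
 V(D)=a^m-a^{2m}+\sum_{w\ne0}(3/2)^{|Z_D(w)|},\qquad
 Z_D(w)=\{i\notin\supp w:(Dw)_i=0\}.
\]
Over random $D$, $|Z_D(w)|$ has distribution $\mathrm{Bin}(m-|w|,1/2)$. The summands are pairwise uncorrelated: rows are independent, and for $i\notin\supp w\cup\supp w'$ the bits $(Dw)_i$ and $(Dw')_i$ are independent. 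Hence
\[
 \Var_DV(D)=\sum_{k=1}^m\binom mk\Bigl[(13/8)^{m-k}-(25/16)^{m-k}\Bigr]
 =(21/8)^m-(13/8)^m-(41/16)^m+(25/16)^m .
\]
So $\E_D[V(D)^2]\ge\Var_DV(D)$ exceeds $a^{2m}=(9/4)^m$ by a factor growing like $(7/6)^m$. The factor is already about $89$ at $m=33$, the smallest case used in Theorem~\ref{thm:lower}. With the true size, your chain of inequalities only bounds $\E Y$ by something of order $(2^r-1)\varepsilon^{-4}(28/27)^m$, which grows with $m$. The hypothesis has no slack, and the badness threshold is fixed by the statement, so adjusting constants cannot rescue this route.

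\textbf{The missing idea.} You never need $\E_D[p(D)^2]$. Average over $D$ only to first order, and fix $D$ before randomizing $T$. Since $\Var_{D,c}K=\E_DV(D)\le a^m$ by Lemma~\ref{lem:moment}, Chebyshev's inequality on the joint distribution of $(D,c)$ gives $\E_Dp(D)\le\varepsilon^{-2}a^{-m}$. Hence some $D$ has $|F_D|=2^mp(D)\le\varepsilon^{-2}(4/3)^m$. For this fixed $D$, your own count gives
\[
 \E_TY=(2^r-1)|F_D|^2/2^m\le(2^r-1)\varepsilon^{-4}(8/9)^m<1 .
\]
The factors $\varepsilon^{-4}$ and $(8/9)^m$ that you matched come from $(\E_Dp)^2$, not from $\E_D[p^2]$. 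This is the paper's proof, phrased there as choosing $T$ with $Tz\notin F_D+F_D$ for all $z\ne0$ and using $|F_D+F_D|\le|F_D|^2$. Your reduction to $Y=0$, your handling of $Tz=0$, and the uniformity of $Tz$ carry over unchanged.
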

\begin{proof}
Let $F_D=\{c:K_D(c)<(1-\varepsilon)a^m\}$. Chebyshev's inequality and
Lemma~\ref{lem:moment} give
\[
 \E_D|F_D|\le 2^m\varepsilon^{-2}a^{-m}
 =\varepsilon^{-2}(4/3)^m.
\]
Choose $D$ attaining this upper bound, and set $F=F_D$. For a uniform
random linear map $T$ and any fixed nonzero $z$, $Tz$ is uniform in
$\F^m$. Therefore
\[
 \Pr[\exists z\ne0:Tz\in F+F]
 \le(2^r-1)|F|^2/2^m
 \le(2^r-1)\varepsilon^{-4}(8/9)^m<1.
\]
Fix a map avoiding these differences. Two distinct preimages of bad
syndromes would give $T(z+z')\in F+F$, a contradiction. The argument
also covers $F=\varnothing$.
\end{proof}

\begin{proof}[Proof of Theorem~\ref{thm:lower}]
Fix a zero-diagonal controller $G\in\F^{r\times r}$, and put
$k_G=\min_dK_G(d)$. Take $D,T$ from Lemma~\ref{lem:controlspace} and form
\[
 \widetilde B=\begin{pmatrix}D&T\\0&G\end{pmatrix}.
\]
For any syndrome $(c,d)$, its solution count is exactly
\[
 K_{\widetilde B}(c,d)=\sum_{z\in\mathcal P_G(d)}K_D(c+Tz)
 \ge(k_G-1)(1-\varepsilon)a^m.
\]
There are at least $k_G$ terms and at most one is bad. Add one vertex $h$
with arcs in both directions to every old vertex. The result is strongly
connected. Setting $x_h=0$ extends every old solution for every value of
the new syndrome bit. With $n=m+r+1$, the same lower bound thus holds
uniformly over all diagonals of the new graph.

For the finite bound, take a bidirected edge as controller. Its four
syndrome counts are $3,2,2,2$, so $r=2$ and $k_G=2$. With
$\varepsilon=1/2$, the condition is $48\,8^m<9^m$, which holds for all
$m\ge33$. Thus $n\ge36$ and the bound is
$\tfrac12a^{n-3}=(4/27)a^n$.

For the asymptotic constant, let $G$ consist of two disjoint bidirected
triangles. A single triangle has solution counts $5,3,3,4$ according as
the syndrome weight is $0,1,2,3$: an odd nonempty support lies in the zero
coordinates and an even support lies in the one coordinates. Hence
$r=6$ and $k_G=9$. Choose $\varepsilon=1/m$; the lemma's condition holds
for all sufficiently large $m$. The resulting lower bound is
\[
 8(1-1/m)a^{n-7}
 =\left(\frac{1024}{2187}-o(1)\right)a^n.
\]
The upper bound follows from~\eqref{eq:fixedBmean}.
\end{proof}

Thus explicitly visiting every P2 solution has worst-case cost
$\Omega((3/2)^n)$ even if an optimal diagonal is supplied for free;
omitting $x=0$ changes the count by at most one. The witness argument
is existential and does not claim an efficient construction. This is not
a lower bound for evaluating the weighted parity sum: structural shortcuts
or aggregation may avoid complete enumeration. Strong connectivity rules
out only the immediate rejection of a graph that is not strongly connected.

\section{Discussion}
\label{sec:discussion}

The smaller exponential base follows from combining a regenerable cover
with ownership and a budget for every affine solution visit. An improvement
beyond this base would need to avoid the complete-enumeration barrier, for
example by summing weighted contributions in aggregate or by constructing
covers adapted to an input fiber. An even-size affine solution space need
not contribute zero, since both its P3 weights and ownership tests can vary.

For general affine product systems, deterministic feasibility already has
the same exponential bound in the number of constraints. Derandomizing
the output-sensitive enumeration remains open here: fixing part of a
ternary translation does not give the simple rank-based conditional counts
available for the Hamiltonian diagonal choice.

\bibliographystyle{plainurl}
\bibliography{references}
\end{document}